\DeclareMathOperator{\diag}{diag} 
\newcommand{\er}{\mathbb{R}}
\newcommand{\cee}{\mathbb{C}}
\newcommand{\enn}{\mathbb{N}}
\newcommand{\lam}{\lambda}
\newcommand{\til}{\tilde}
\newcommand{\K}{\widetilde K}
\renewcommand{\P}{\widetilde P}
\newcommand{\Q}{\widetilde Q}
\newcommand{\bol}{\hfill\square\\}
\newtheorem{theorem}{Theorem}[section]
\newtheorem{lemma}[theorem]{Lemma}
\newtheorem{proposition}[theorem]{Proposition}
\newtheorem{rhp}[theorem]{RH problem}
\theoremstyle{definition}
\theoremstyle{remark}
\newtheorem{remark}[theorem]{Remark}
\numberwithin{equation}{section}
\title{Average characteristic polynomials in the two-matrix model}
\author{Steven Delvaux\footnotemark[1]}
\date{\today}
\begin{document}

\maketitle
\renewcommand{\thefootnote}{\fnsymbol{footnote}}
\footnotetext[1]{Department of Mathematics, Katholieke Universiteit Leuven,
Celestijnenlaan 200B, B-3001 Leuven, Belgium. email:
steven.delvaux\symbol{'100}wis.kuleuven.be. The author is a Postdoctoral Fellow
of the Fund for Scientific Research - Flanders (Belgium).}

\begin{abstract}
The two-matrix model is defined on pairs of Hermitian matrices $(M_1,M_2)$ of
size $n\times n$ by the probability measure \begin{equation*}\frac{1}{Z_n}
\exp\left(\textrm{Tr} (-V(M_1)-W(M_2)+\tau M_1M_2)\right)\ dM_1\ dM_2,
\end{equation*}
where $V$ and $W$ are given potential functions and $\tau\in\er$. We study
averages of products and ratios of characteristic polynomials in the two-matrix
model, where both matrices $M_1$ and $M_2$ may appear in a combined way in both
numerator and denominator. We obtain determinantal expressions for such
averages. The determinants are constructed from several building blocks: the
biorthogonal polynomials $p_n(x)$ and $q_n(y)$ associated to the two-matrix
model; certain transformed functions $\P_n(w)$ and $\Q_n(v)$; and finally
Cauchy-type transforms of the four Eynard-Mehta kernels $K_{1,1}$, $K_{1,2}$,
$K_{2,1}$ and $K_{2,2}$. In this way we generalize known results for the
$1$-matrix model. Our results also imply a new proof of the Eynard-Mehta
theorem for correlation functions in the two-matrix model, and they lead to a
generating function for averages of products of traces.

\textbf{Keywords}: Two-matrix model, average characteristic polynomial,
Eynard-Mehta theorem, biorthogonal polynomial, multiple orthogonal polynomial,
Riemann-Hilbert problem, determinantal point process.

\end{abstract}

\section{Introduction and statement of results}
\label{section:introduction}

\subsection{The two-matrix model}
\label{subsection:twomatrixdef}

Let $\mathcal H_n$ be the space of Hermitian $n$ by $n$ matrices. The
\emph{two-matrix model} is defined on $\mathcal H_n\times \mathcal H_n$ by the
probability measure
\begin{equation}\label{twomatrixmodel}
\frac{1}{Z_n} \exp\left(\textrm{Tr} (-V(M_1)-W(M_2)+\tau M_1M_2)\right)\ dM_1\
dM_2,\qquad (M_1,M_2)\in\mathcal H_n\times \mathcal H_n,
\end{equation}
where $V$ and $W$ are given real-valued \emph{potential functions}, where
$\tau\in\er$ is the \emph{coupling constant}, $Z_n$ is a normalization
constant, $\textrm{Tr}$ denotes the trace and $dM_1 dM_2$ is the Lebesgue
measure on $\mathcal H_n\times \mathcal H_n$, see \cite{Mehta}. The matrix
model \eqref{twomatrixmodel} is probably the most popular instance of a
two-matrix model in the literature, see e.g.\
\cite{AvMcoupled,BEy,BEH1,BEH2,BHI,DiFran,Duits2,EMcL,EM,EO,Kap,KMcL,Mehta,Mehta2}
among many others. We may mention however that other kinds of two-matrix models
exist as well, see e.g.\ \cite{BGS} and the references therein.

The normalization constant $Z_n$ in \eqref{twomatrixmodel} is called the
\emph{partition function}. It is defined as the total integral
\begin{equation}\label{partitionfunction} Z_n = \int_{\mathcal H_n}\!\int_{\mathcal H_n}\!
\exp\left(\textrm{Tr}(-V(M_1)-W(M_2)+\tau M_1M_2)\right)\ dM_1\ dM_2.
\end{equation}
We assume that the potential functions $V$ and $W$ have sufficient increase at
infinity in order that the integral \eqref{partitionfunction} is convergent.
Throughout this paper, we will assume that $V$ and $W$ are polynomials of even
degree with positive leading coefficients.

Due to the \emph{Harish-Chandra formula} (or \emph{Itzykson-Zuber formula})
\cite{Mehta}, integrals over the two-matrix model \eqref{twomatrixmodel} can
often be rewritten in terms of the eigenvalues
of $M_1$ and $M_2$. We need the following version of this result. Assume that
the functions $f,g:\mathcal H_n\to\cee$ depend only on the eigenvalues
$(\lam_i)_{i=1}^n$ of $M_1$ and $(\mu_i)_{i=1}^n$ of $M_2$, respectively,
(taking into account eigenvalue multiplicities), in the sense that
$$ f(M_1) = \til f(\lam_1,\ldots,\lam_n),\qquad g(M_2) = \til
g(\mu_1,\ldots,\mu_n),
$$
where $\til f,\til g:\er^n\to\cee$ are symmetric functions of $n$ variables.
Then
\begin{multline}\label{inteigs}
\frac{1}{Z_n}\int_{\mathcal H_n}\!\int_{\mathcal H_n}\! f(M_1)g(M_2)
e^{\textrm{Tr} (-V(M_1)-W(M_2)+\tau M_1M_2)}\ dM_1\ dM_2 \\ = \frac{1}{\til
Z_n}\int_{-\infty}^{\infty}\!\ldots\int_{-\infty}^{\infty}\! \til
f(\lam_1,\ldots,\lam_n)\til g(\mu_1,\ldots,\mu_n)\prod_{i=1}^n \left(
e^{-V(\lam_i)} e^{-W(\mu_i)}\right)\\
\times\Delta(\mathbf{\lam})\Delta(\mathbf{\mu})\det(e^{\tau\lam_i\mu_j})_{i,j=1}^n
\prod_{i=1}^n \left( d\lam_i\ d\mu_i\right),
\end{multline}
for a new normalization constant $\til Z_n$, see \cite{Mehta}. The factors
$\Delta(\mathbf{\lam})$ and $\Delta(\mathbf{\mu})$ in \eqref{inteigs} denote
the Vandermonde determinants $\prod_{1\leq i<j\leq n}(\lam_j-\lam_i)$ and
$\prod_{1\leq i<j\leq n}(\mu_j-\mu_i)$ respectively. In \eqref{inteigs} we also
assume that $\til f$ and $\til g$ are such that the integrals converge.

\subsection{Biorthogonal polynomials and kernels}
\label{subsection:kernelsdefs}

It is known that the two-matrix model \eqref{twomatrixmodel} can be analyzed by
means of biorthogonal polynomials and kernel functions that we describe below
\cite{EMcL,EM,Mehta}. We will also introduce certain (non-standard) Cauchy-type
transforms of these objects, which will be needed in the statement of our main
theorems.

The monic \emph{biorthogonal polynomials} $\{p_i(x)\}_{i=0}^{\infty}$ and
$\{q_j(y)\}_{j=0}^{\infty}$ are defined as the monic polynomials
$$p_i(x)=x^i+O(x^{i-1}),\qquad q_j(y)=y^j+O(y^{j-1}), $$
that satisfy the orthogonality relations
\begin{equation}\label{def:pnqn}
\int_{-\infty}^{\infty}\!\int_{-\infty}^{\infty}\!
p_i(x)q_j(y)e^{-V(x)-W(y)+\tau xy}\ dx\ dy=0,\qquad i\neq j.
\end{equation}
These polynomials exist and are unique, see \cite{EMcL}.

We define the \emph{transformed functions} $P_i$ and $Q_j$ by
\begin{eqnarray}\label{def:Pn}
P_i(y) &=& \int_{-\infty}^{\infty}\! p_i(x) e^{-V(x)-W(y)+\tau xy}\ dx,
\\ \label{def:Qn}
Q_j(x) &=& \int_{-\infty}^{\infty}\! q_j(y) e^{-V(x)-W(y)+\tau xy}\ dy,
\end{eqnarray}
and we let $h_i^2$ be defined by
\begin{equation}\label{def:hk}
h_i^2 = \int_{-\infty}^{\infty}\!\int_{-\infty}^{\infty}\!
p_i(x)q_i(y)e^{-V(x)-W(y)+\tau xy}\ dx\ dy.
\end{equation}
The integral \eqref{def:hk} is positive, see \cite{EMcL}.

The \emph{Eynard-Mehta kernels} $K_{1,1}$, $K_{1,2}$, $K_{2,1}$ and $K_{2,2}$
are defined as \cite{EM,BR}
\begin{eqnarray}\label{def:K11}
K_{1,1}(x_1,x_2) &=& \sum_{i=0}^{n-1} \frac{1}{h_i^2}p_i(x_1)Q_i(x_2),
\\ \label{def:K12}
K_{1,2}(x,y) &=& \sum_{i=0}^{n-1} \frac{1}{h_i^2}p_i(x)q_i(y),
\\ \label{def:K21}
K_{2,1}(y,x) &=& \left(\sum_{i=0}^{n-1}
\frac{1}{h_i^2}P_i(y)Q_i(x)\right)-e^{-V(x)-W(y)+\tau xy},
\\ \label{def:K22}
K_{2,2}(y_1,y_2) &=& \sum_{i=0}^{n-1} \frac{1}{h_i^2}P_i(y_1)q_i(y_2).
\end{eqnarray}
It is known that the correlation functions for the two-matrix model
\eqref{twomatrixmodel} can be written as determinants of matrices built out of
the four Eynard-Mehta kernels; see \cite{EM} and see also
Section~\ref{subsection:EynardMehta} below.

In what follows, we will need certain Cauchy-type transforms of the above
objects. First we define the Cauchy transforms $\P_i(w)$ and $\Q_j(v)$ of the
functions $P_i$ and $Q_j$ in \eqref{def:Pn}--\eqref{def:Qn}:
\begin{eqnarray}\label{def:Pn:til}
\P_i(w) &=& \int_{-\infty}^{\infty}\! \frac{P_i(\eta)}{w-\eta}\ d\eta =
\int_{-\infty}^{\infty}\!\int_{-\infty}^{\infty}\!
\frac{p_i(\xi)e^{-V(\xi)-W(\eta)+\tau\xi\eta}}{w-\eta}\ d\xi\ d\eta,
\\
\label{def:Qn:til} \Q_j(v) &=& \int_{-\infty}^{\infty}\!
\frac{Q_j(\xi)}{v-\xi}\ d\xi =
\int_{-\infty}^{\infty}\!\int_{-\infty}^{\infty}\!
\frac{q_j(\eta)e^{-V(\xi)-W(\eta)+\tau\xi\eta}}{v-\xi}\ d\xi\ d\eta.
\end{eqnarray}
We will also need the following Cauchy-type transforms of the Eynard-Mehta
kernels:
\begin{eqnarray}\label{def:K11til}
\K_{1,1}(x,v) &=& \frac{1}{x-v}\int_{-\infty}^{\infty}
\frac{x-\xi}{v-\xi}K_{1,1}(x,\xi)\ d\xi,
\\ \label{def:K12til}
\K_{1,2}(x,y) &=& K_{1,2}(x,y),
\\ \label{def:K21til}
\K_{2,1}(w,v) &=&
\int_{-\infty}^{\infty}\!\int_{-\infty}^{\infty}\frac{K_{2,1}(\eta,\xi)}{(w-\eta)(v-\xi)}\
d\xi\ d\eta,
\\  \label{def:K22til}
\K_{2,2}(w,y) &=& \frac{1}{y-w}\int_{-\infty}^{\infty}
\frac{y-\eta}{w-\eta}K_{2,2}(\eta,y)\ d\eta.
\end{eqnarray}
See Section~\ref{subsection:reproducing} for several properties of the kernels
$\K_{i,j}$, $i,j=1,2$. Note that in the above formulas, we consistently use \lq
$\xi$\rq\ and \lq $\eta$\rq\ to denote integration variables which play the
role of an \lq $x$-variable\rq\ or a \lq$y$-variable\rq\ respectively.

The following lemma gives some alternative expressions for the kernels
$\K_{i,j}$, $i,j=1,2$.

\begin{lemma}\label{lemma:sumformulas} (Summation formulas for the kernels $\K_{i,j}$:) We have the following
equivalent expressions for the kernels $\K_{i,j}$ in
\eqref{def:K11til}--\eqref{def:K22til}:
\begin{eqnarray}\label{def:K11sum}
\K_{1,1}(x,v) &=&
\left(\sum_{i=0}^{n-1}\frac{1}{h_i^2}p_i(x)\Q_i(v)\right)-\frac{1}{v-x},
\\ \label{def:K12sum}
\K_{1,2}(x,y) &=& \sum_{i=0}^{n-1} \frac{1}{h_i^2}p_i(x)q_i(y),
\\ \label{def:K21sum}
\K_{2,1}(w,v) &=& \left(\sum_{i=0}^{n-1}\frac{1}{h_i^2}\P_i(w)\Q_i(v)\right)-
\int_{-\infty}^{\infty}\!\int_{-\infty}^{\infty}\frac{e^{-V(\xi)-W(\eta)+\tau\xi\eta}}{(w-\eta)(v-\xi)}\
d\xi\ d\eta,
\\  \label{def:K22sum}
\K_{2,2}(w,y) &=&
\left(\sum_{i=0}^{n-1}\frac{1}{h_i^2}\P_i(w)q_i(y)\right)-\frac{1}{w-y},
\end{eqnarray}
where we use the notations in \eqref{def:Pn:til}--\eqref{def:Qn:til}.
\end{lemma}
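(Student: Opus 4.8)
The plan is to verify the four identities \eqref{def:K11sum}--\eqref{def:K22sum} one at a time, simply by substituting the series definitions \eqref{def:K11}--\eqref{def:K22} of the kernels $K_{i,j}$ into the Cauchy-type transforms \eqref{def:K11til}--\eqref{def:K22til} and interchanging the (finite) sum with the integrals. The formula for $\K_{1,2}$ is immediate, since \eqref{def:K12til} and \eqref{def:K12} already give the claimed expression. The formula for $\K_{2,1}$ is almost as direct: inserting \eqref{def:K21} into \eqref{def:K21til} and using Fubini to factor the double integral, the diagonal sum produces $\sum_i h_i^{-2}\P_i(w)\Q_i(v)$ by the very definitions \eqref{def:Pn:til}--\eqref{def:Qn:til} of the Cauchy transforms, while the subtracted exponential reproduces exactly the double integral appearing in \eqref{def:K21sum}. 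So the real work is confined to the two ``diagonal'' kernels $\K_{1,1}$ and $\K_{2,2}$, where a correction term $\pm 1/(v-x)$ has to be extracted.

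For $\K_{1,1}$ I would substitute \eqref{def:K11} into \eqref{def:K11til}, pull $p_i(x)$ out of the $\xi$-integral, and treat the remaining $\int \tfrac{x-\xi}{v-\xi} Q_i(\xi)\,d\xi$ via the elementary partial-fraction identity
\begin{equation*}
\frac{x-\xi}{v-\xi} = 1 + \frac{x-v}{v-\xi}.
\end{equation*}
This splits the integral into $\int Q_i(\xi)\,d\xi$ plus $(x-v)\,\Q_i(v)$, the latter by \eqref{def:Qn:til}. After dividing by $x-v$ the second piece yields the desired main sum $\sum_i h_i^{-2}p_i(x)\Q_i(v)$, and the first piece contributes the leftover $\frac{1}{x-v}\sum_i h_i^{-2}p_i(x)\int Q_i(\xi)\,d\xi$.

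The decisive step is then to recognize this leftover total integral. By \eqref{def:Qn},
\begin{equation*}
\int_{-\infty}^{\infty}\! Q_i(\xi)\,d\xi = \int_{-\infty}^{\infty}\!\int_{-\infty}^{\infty}\! 1\cdot q_i(y)\, e^{-V(\xi)-W(y)+\tau\xi y}\,d\xi\,dy,
\end{equation*}
which is precisely the biorthogonal pairing of $p_0\equiv 1$ against $q_i$. Hence by \eqref{def:pnqn} and \eqref{def:hk} it equals $h_0^2\,\delta_{i,0}$, so that $\sum_i h_i^{-2}p_i(x)\int Q_i(\xi)\,d\xi = p_0(x) = 1$. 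The leftover therefore collapses to $\frac{1}{x-v} = -\frac{1}{v-x}$, giving exactly \eqref{def:K11sum}. The identity \eqref{def:K22sum} follows in the same way, using $\frac{y-\eta}{w-\eta}=1+\frac{y-w}{w-\eta}$, the transform \eqref{def:Pn:til}, and the dual pairing $\int P_i(\eta)\,d\eta = h_0^2\,\delta_{i,0}$ coming from $q_0\equiv 1$ in \eqref{def:Pn}.

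I expect the only genuine obstacle to be this collapse: one must notice that the ``extra'' total integrals are not merely constants but are exactly the $i=0$ biorthogonality pairings, which is what converts an unwanted $x$-dependent (resp. $y$-dependent) sum into the clean constant $1$ and hence into the single correction term. The remaining points are routine — justifying Fubini and the interchange of the finite sum with the integrals, which is covered by the growth assumptions on $V$ and $W$, and carefully tracking the signs in $\pm 1/(v-x)$ and $\pm 1/(w-y)$.
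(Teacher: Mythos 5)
Your proposal is correct and is essentially the paper's own proof: your partial-fraction identity $\frac{x-\xi}{v-\xi}=1+\frac{x-v}{v-\xi}$ is the same decomposition the paper writes as $x-\xi=(x-v)+(v-\xi)$, and your collapse $\int_{-\infty}^{\infty} Q_i(\xi)\,d\xi = h_0^2\,\delta_{i,0}$ via $p_0\equiv 1$ is exactly the paper's computation showing $\int_{-\infty}^{\infty} K_{1,1}(x,\xi)\,d\xi=1$ in \eqref{proof:repr1}. The treatment of $\K_{1,2}$ and $\K_{2,1}$ as immediate from the definitions, and of $\K_{2,2}$ by the symmetric argument with $q_0\equiv 1$, likewise matches the paper.
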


Lemma~\ref{lemma:sumformulas} is proved in
Section~\ref{subsection:proof:lemma:sum}.

\subsection{Average characteristic polynomials}
\label{subsection:statementresults}

Let the matrices $M_1,M_2\in\mathcal H_n$ be distributed according to the
two-matrix model \eqref{twomatrixmodel}. The \emph{average characteristic
polynomial} of the matrix $M_1$ is defined by (the notation in the left hand
side is explained below)
\begin{equation}\label{avcharpol:00}
P^{[1,0,0,0]}(x):=\frac{1}{Z_n} \int_{\mathcal H_n}\!\int_{\mathcal H_n}\!
\det(x I_n-M_1)\ e^{\textrm{Tr}( -V(M_1)-W(M_2)+\tau M_1M_2)}\ dM_1\ dM_2,
\end{equation}
where $I_n$ denotes the identity matrix of size $n$. Similarly, the average
characteristic polynomial of the matrix $M_2$ is defined by
\begin{equation}\label{avcharpol:01}
P^{[0,1,0,0]}(y):=\frac{1}{Z_n} \int_{\mathcal H_n}\!\int_{\mathcal H_n}\!
\det(y I_n-M_2)\ e^{\textrm{Tr}( -V(M_1)-W(M_2)+\tau M_1M_2)}\ dM_1\ dM_2.
\end{equation}

In this paper, we consider more general versions of
\eqref{avcharpol:00}--\eqref{avcharpol:01}. We are interested in averages of
products and ratios of characteristic polynomials, where both matrices $M_1$
and $M_2$ may appear in a combined way in both numerator and denominator. Our
object of study is of the form
\begin{multline}\label{avcharpol}
P_{n}^{[I,J,K,L]}(x_1,\ldots,x_I;y_1,\ldots,y_J;v_1,\ldots,v_K;w_1,\ldots,w_L):=\\
\frac{1}{Z_n} \int_{\mathcal H_n}\!\int_{\mathcal H_n}\!\frac{\prod_{i=1}^I
\det(x_i I_n-M_1)\prod_{j=1}^J \det(y_j I_n-M_2)}{\prod_{k=1}^K \det(v_k
I_n-M_1)\prod_{l=1}^L \det(w_l I_n-M_2)}\ e^{\textrm{Tr}( -V(M_1)-W(M_2)+\tau
M_1M_2)}\ dM_1\ dM_2.
\end{multline}
Here $I,J,K,L\in\enn\cup\{0\}$ are nonnegative integers and we assume that
$x_1,\ldots,x_I,y_1,\ldots,y_J\in\cee$, that
$v_1,\ldots,v_K,w_1,\ldots,w_L\in\cee\setminus\er$, that the numbers in the set
$(x_1,\ldots,x_I$, $v_1,\ldots,v_K$) are pairwise distinct, and similarly the
numbers in the set $(y_1,\ldots,y_J$, $w_1,\ldots,w_L)$ are pairwise distinct.
The numbers $x_i,y_j,v_k,w_l$ are sometimes called \emph{external
sources}~\cite{Bergere2}.

Due to \eqref{inteigs}, the expression \eqref{avcharpol} can be written
alternatively as
\begin{multline}\label{avcharpol:Harish}
P_{n}^{[I,J,K,L]}(x_1,\ldots,x_I;y_1,\ldots,y_J;v_1,\ldots,v_K;w_1,\ldots,w_L)
\\ = \frac{1}{\til Z_n}\int_{-\infty}^{\infty}\!\ldots\int_{-\infty}^{\infty}
\prod_{m=1}^n\left( \frac{\prod_{i=1}^I (x_i-\lam_m)\prod_{j=1}^J
(y_j-\mu_m)}{\prod_{k=1}^K (v_k-\lam_m)\prod_{l=1}^L (w_l-\mu_m)}\right)\\
\times\prod_{i=1}^n  \left( e^{-V(\lam_i)}
e^{-W(\mu_i)}\right)\Delta(\mathbf{\lam})\Delta(\mathbf{\mu})\det(e^{\tau\lam_i\mu_j})_{i,j=1}^n
\prod_{i=1}^n \left( d\lam_i\ d\mu_i\right).
\end{multline}

In principle, an expression for \eqref{avcharpol} can be obtained as follows:
if we define the \emph{perturbed} potentials $$\widehat
V(x)=V(x)-\log\left(\frac{\prod_{i=1}^I (x_i-x)} {\prod_{k=1}^K
(v_j-x)}\right), \qquad \widehat W(y)=W(y)-\log\left(\frac{\prod_{j=1}^J
(y_j-y)} {\prod_{l=1}^L (w_l-y)}\right),$$ then \eqref{avcharpol} is just the
scaled partition function corresponding to these perturbed potentials:
\begin{equation}
P_{n}^{[I,J,K,L]}(x_1,\ldots,x_I;y_1,\ldots,y_J;v_1,\ldots,v_K;w_1,\ldots,w_L)
= \frac{\widehat Z_n}{Z_n},
\end{equation}
with
$$ \widehat Z_n :=
\int_{\mathcal H_n}\!\int_{\mathcal H_n}\! \exp\left(\textrm{Tr} (-\widehat
V(M_1)-\widehat W(M_2)+\tau M_1M_2)\right)\ dM_1\ dM_2.
$$
Our interest, however, is to express \eqref{avcharpol} in terms of the
\emph{unperturbed} potentials $V$ and $W$. In this way we will obtain analogues
of known results for the $1$-matrix model, which go back to Br\'ezin-Hikami
\cite{BH3}, Fyodorov-Strahov \cite{FS,SF} and others
\cite{BDS,Bergere2,BS,KS,KG}.

To start with, we consider the case where only one external source is present.

\begin{theorem}\label{theorem:avpol1} (Average characteristic polynomials of $M_1$ and $M_2$:)
For the average characteristic polynomials in
\eqref{avcharpol:00}--\eqref{avcharpol:01}, we have
$$ P_{n}^{[1,0,0,0]}(x) = p_n(x),
$$
$$ P_{n}^{[0,1,0,0]}(y) = q_n(y),
$$
where $p_n$ and $q_n$ denote the biorthogonal polynomials in \eqref{def:pnqn}.
\end{theorem}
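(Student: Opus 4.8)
The plan is to prove the formula $P_n^{[1,0,0,0]}(x) = p_n(x)$ by starting from the eigenvalue representation in \eqref{avcharpol:Harish} specialized to $I=1$, $J=K=L=0$. In this case the integrand contains the product $\prod_{m=1}^n (x-\lam_m)$, which I would first recognize as the monic characteristic polynomial of the $\lam$-variables, evaluated at $x$. The key idea is to absorb this extra factor into the Vandermonde determinant $\Delta(\mathbf{\lam})=\prod_{i<j}(\lam_j-\lam_i)$, producing a Vandermonde-type determinant in the $n+1$ nodes $\lam_1,\ldots,\lam_n,x$. Concretely, $\prod_{m=1}^n(x-\lam_m)\,\Delta(\lam_1,\ldots,\lam_n) = \pm\,\Delta(\lam_1,\ldots,\lam_n,x)$, so the numerator becomes a single $(n+1)\times(n+1)$ Vandermonde determinant whose last node is the external parameter $x$.

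Next I would exploit the freedom to replace monomials by arbitrary monic polynomials inside a Vandermonde determinant: since $\Delta(\lam_1,\ldots,\lam_n,x)=\det(\lam_j^{i})$ can be row-reduced to $\det\big(p_i(\lam_j)\big)$ using the monic biorthogonal polynomials $p_0,\ldots,p_n$ (column operations leave the determinant unchanged because each $p_i$ is monic of degree $i$), the augmented determinant equals $\det$ of a matrix whose entries in the first $n$ rows involve $p_i(\lam_j)$ and whose last row involves $p_i(x)$. The remaining task is then to integrate out all the $\lam$ and $\mu$ variables against the weight $\prod_i e^{-V(\lam_i)-W(\mu_i)}\,\Delta(\mathbf{\mu})\,\det(e^{\tau\lam_i\mu_j})$. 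Here I would expand the other Vandermonde $\Delta(\mathbf{\mu})$ as $\det(q_j(\mu_k))$ in the biorthogonal $q_j$'s, and use the orthogonality relation \eqref{def:pnqn} together with \eqref{def:hk}: the coupled integral $\int\!\int p_i(\lam)q_j(\mu)\,e^{-V(\lam)-W(\mu)+\tau\lam\mu}\,d\lam\,d\mu$ vanishes unless $i=j$, in which case it equals $h_i^2$. This is exactly the mechanism that produces a diagonal structure after integration.

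The combinatorial heart of the argument is the Andréief (Heine) identity: integrating a product of two $(n+1)\times(n+1)$ determinants $\det(p_i(\lam_j))$ and (after incorporating the $x$-row) the $\mu$-determinant against a product measure collapses to a single determinant whose entries are the pairwise integrals. Because of biorthogonality this single determinant is essentially lower- or upper-triangular with the $h_i^2$ on the diagonal for the first $n$ indices, while the extra row carrying $p_n(x)$ (the only surviving term, since lower-degree $p_i(x)$ with $i<n$ get annihilated against the orthogonal $q_n$ slot) multiplies the diagonal product $\prod_{i=0}^{n-1}h_i^2$. Dividing by the normalization $\til Z_n$, which is itself $\prod_{i=0}^{n-1}h_i^2$ up to the same combinatorial constant by the $[0,0,0,0]$ case, cancels all the $h_i^2$ factors and leaves precisely $p_n(x)$.

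The main obstacle I anticipate is bookkeeping the sign and the precise index shift so that it is the top polynomial $p_n(x)$ — rather than some lower $p_i(x)$ or a spurious linear combination — that emerges: this hinges on verifying that after row-reducing the augmented Vandermonde and applying Andréief, every term with $\deg < n$ in the $x$-row pairs with a $q_j$ that has already been consumed by the diagonal and therefore integrates to zero, leaving only the $i=n$ contribution. Once the triangular/orthogonality structure is set up correctly, the cancellation against $\til Z_n$ is routine. The statement for $P_n^{[0,1,0,0]}(y)=q_n(y)$ follows by the evident symmetry of the model under interchanging the roles of $(M_1,V,\lam,p)$ and $(M_2,W,\mu,q)$, so I would prove the first identity in detail and then invoke this symmetry.
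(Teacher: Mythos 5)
Your proposal is correct in substance, but it takes a genuinely different route from the paper. The paper never manipulates the coupled eigenvalue integral for this theorem: instead it integrates out the second matrix so that the eigenvalues of $M_1$ form a \emph{multiple} orthogonal polynomial ensemble with the weights $w_j$ of \eqref{def:ws}, identifies $p_n$ with the entry $Y_{1,1}$ of the Kuijlaars--McLaughlin Riemann--Hilbert problem (Lemma~\ref{lemma:RHchar}, quoting \cite{KMcL}), and then invokes the known one-matrix-ensemble results of \cite{BK1} and \cite{Desrosiers1}. You instead work directly with \eqref{avcharpol:Harish}: absorb $\prod_m(x-\lam_m)$ into an augmented Vandermonde, pass to the monic biorthogonal families, and collapse everything by an Andr\'eief-type computation and the biorthogonality \eqref{def:pnqn} to a determinant of pairwise integrals $h_k^2\delta_{kl}$, so that only the $p_n(x)$ cofactor survives and the normalization $\til Z_n = (n!)^2\prod_{i=0}^{n-1}h_i^2$ cancels. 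Your approach is more elementary and self-contained; the paper's approach buys machinery it needs anyway --- in particular for Theorem~\ref{theorem:avpol2}, where the answer involves the Cauchy transform $\Q_{n-1}$ and where your symmetry remark does not apply, the RH characterization does the work uniformly. (The paper's caveat that Heine-type moment-determinant arguments do not straightforwardly generalize refers to Hankel-determinant representations; your direct symmetrization argument is a different mechanism and does go through for this statement.)

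Three details in your outline need tightening, though none is fatal. First, Andr\'eief applies to \emph{two determinants of equal size over a product measure}, while you have a size mismatch ($(n{+}1)\times(n{+}1)$ versus $n\times n$) and the coupling determinant $\det(e^{\tau\lam_i\mu_j})$, which is not a product weight. The clean fix: expand $\Delta(\lam_1,\ldots,\lam_n,x)=\det\bigl(p_k(\cdot)\bigr)$ along the column of $x$-entries first, and use the antisymmetry of the remaining factors to replace $\det(e^{\tau\lam_i\mu_j})$ by $n!\prod_i e^{\tau\lam_i\mu_i}$; then Andr\'eief applied to the pairs $(\lam_i,\mu_i)$ yields, for the cofactor of $p_i(x)$, the determinant $\det\bigl(h_k^2\delta_{kl}\bigr)$ with row indices $k\in\{0,\ldots,n\}\setminus\{i\}$ and column indices $l\in\{0,\ldots,n-1\}$. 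Second, your explanation of why the terms $p_i(x)$ with $i<n$ die --- ``annihilated against the orthogonal $q_n$ slot'' --- is slightly off: no $q_n$ appears anywhere, since $\Delta(\mathbf{\mu})$ only produces $q_0,\ldots,q_{n-1}$; the correct mechanism is that for $i<n$ the surviving minor retains the row indexed by $p_n$, which is biorthogonal to all of $q_0,\ldots,q_{n-1}$ and is therefore a zero row. Third, the resulting matrix is exactly diagonal (not merely triangular) after this bookkeeping. With these repairs your argument is complete, and the symmetry $(M_1,V,p)\leftrightarrow(M_2,W,q)$ indeed gives $P_n^{[0,1,0,0]}(y)=q_n(y)$ at no extra cost.
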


\begin{theorem}\label{theorem:avpol2} (Average inverse characteristic polynomials of $M_1$ and $M_2$:)
With $(I,J,K,L)$ equal to $(0,0,1,0)$ or $(0,0,0,1)$ in \eqref{avcharpol}, we
have
$$ P_{n}^{[0,0,1,0]}(v) = \frac{\Q_{n-1}(v)}{h_{n-1}^2},
$$
$$ P_{n}^{[0,0,0,1]}(w) = \frac{\P_{n-1}(w)}{h_{n-1}^2},
$$
where we use the notations in \eqref{def:hk} and
\eqref{def:Pn:til}--\eqref{def:Qn:til}.
\end{theorem}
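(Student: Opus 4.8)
The plan is to start from the eigenvalue representation \eqref{avcharpol:Harish} with $(I,J,K,L)=(0,0,1,0)$, so that the insertion is simply $\prod_{m=1}^n (v-\lam_m)^{-1}$, and to reduce the $2n$-fold integral to a single $n\times n$ determinant by integrating out the two families of eigenvalues one after the other. First I would replace the Vandermonde factors by monic biorthogonal polynomials, writing $\Delta(\mathbf\lam)=\det(p_{j-1}(\lam_i))_{i,j=1}^n$ and $\Delta(\mathbf\mu)=\det(q_{j-1}(\mu_i))_{i,j=1}^n$ (both legitimate since $p_{j-1},q_{j-1}$ are monic of degree $j-1$, so the change of basis is unitriangular). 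The $\mu$-integral then couples $\det(q_{j-1}(\mu_i))$ with $\det(e^{\tau\lam_i\mu_j})$, and the Andr\'eief (Heine) identity turns it into $n!\det\big(\int q_{j-1}(\mu)e^{\tau\lam_k\mu-W(\mu)}\,d\mu\big)_{j,k}$. Recognizing the inner integral as $e^{V(\lam_k)}Q_{j-1}(\lam_k)$ via \eqref{def:Qn}, the exponentials $e^{V(\lam_k)}$ cancel the factors $e^{-V(\lam_k)}$ already present in the weight.

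After this first reduction the problem becomes a single $\lam$-integral of $\det(p_{j-1}(\lam_i))\det(Q_{j-1}(\lam_i))\prod_i (v-\lam_i)^{-1}\,d\lam_i$. Applying Andr\'eief once more (the condition $v\in\cee\setminus\er$ guarantees convergence), and tracking the normalization $\til Z_n$ through the same two steps applied to the plain partition function, yields
\begin{equation*}
P_n^{[0,0,1,0]}(v)=\frac{1}{\prod_{i=0}^{n-1}h_i^2}\,\det\left(M_{jk}\right)_{j,k=1}^n,\qquad M_{jk}=\int_{-\infty}^\infty\frac{p_{j-1}(\lam)Q_{k-1}(\lam)}{v-\lam}\,d\lam,
\end{equation*}
where the prefactor uses \eqref{def:hk}. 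The heart of the argument is the evaluation of this determinant. I would split $\frac{p_{j-1}(\lam)}{v-\lam}=\frac{p_{j-1}(\lam)-p_{j-1}(v)}{v-\lam}+\frac{p_{j-1}(v)}{v-\lam}$; the first summand is a polynomial in $\lam$ of degree $j-2$, so after integrating against $Q_{k-1}$ and using the biorthogonality \eqref{def:pnqn} (equivalently $\int p_mQ_{k-1}=h_m^2\delta_{m,k-1}$) it produces a strictly lower-triangular matrix $A=(A_{jk})$ with subdiagonal entries $A_{j,j-1}=-h_{j-2}^2$, while the second summand contributes $p_{j-1}(v)\Q_{k-1}(v)$ by the definition \eqref{def:Qn:til} of $\Q_{k-1}$. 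Thus $M=A+\mathbf{p}(v)\,\mathbf{\Q}(v)^{\mathsf{T}}$ is strictly lower triangular plus a rank-one term, with $\mathbf{p}(v)=(p_{j-1}(v))_j$ and $\mathbf{\Q}(v)=(\Q_{k-1}(v))_k$.

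Because $p_0\equiv 1$, the first row of $M$ equals $\mathbf{\Q}(v)^{\mathsf{T}}$; subtracting $p_{j-1}(v)$ times this row from each lower row (which leaves the determinant unchanged) removes the rank-one part and replaces rows $2,\dots,n$ by the corresponding rows of $A$. Expanding along the last column---whose only nonzero entry is $\Q_{n-1}(v)$ in the top position, since $A_{jn}=0$ for $j\ge 2$---and noting that the remaining minor is lower triangular with diagonal $-h_0^2,\dots,-h_{n-2}^2$, I obtain $\det(M_{jk})=\Q_{n-1}(v)\prod_{i=0}^{n-2}h_i^2$. Dividing by $\prod_{i=0}^{n-1}h_i^2$ gives $P_n^{[0,0,1,0]}(v)=\Q_{n-1}(v)/h_{n-1}^2$, as claimed. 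The second identity $P_n^{[0,0,0,1]}(w)=\P_{n-1}(w)/h_{n-1}^2$ then follows by the symmetry of the model under the exchange $(M_1,V,x,p,Q,\Q)\leftrightarrow(M_2,W,y,q,P,\P)$, which interchanges the two statements. I expect the main obstacle to be the clean identification of the triangular-plus-rank-one structure of $M$---in particular verifying that the polynomial remainders give exactly a strictly lower-triangular matrix with the stated subdiagonal---together with the careful bookkeeping of $\til Z_n$ through the double Andr\'eief reduction so that it collapses to the single factor $1/h_{n-1}^2$.
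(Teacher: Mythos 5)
Your proof is correct, and it takes a genuinely different route from the paper's. The paper never touches the eigenvalue integrals directly for this theorem: it integrates out $M_2$ so that the eigenvalues of $M_1$ form a multiple orthogonal polynomial ensemble with the vector of weights \eqref{def:ws}, sets up the Kuijlaars--McLaughlin RH problem \ref{RHP:Y} and its dual \ref{RHP:X}, proves in Lemma~\ref{lemma:RHchar} that $X_{2,2}(z)=\Q_{n-1}(z)/h_{n-1}^2$ (using the type-1 orthogonality identities of Van Assche et al.\ plus an integration-by-parts reduction of the weight vector), and then invokes the known results of Bleher--Kuijlaars \cite{BK1} and Desrosiers--Forrester \cite{Desrosiers1} identifying average (inverse) characteristic polynomials of such ensembles with RH entries. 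You instead compute from scratch: a double Andr\'eief reduction to $\det\bigl(\int p_{j-1}(\lam)Q_{k-1}(\lam)(v-\lam)^{-1}\,d\lam\bigr)_{j,k}$ with $\til Z_n=(n!)^2\prod_{i=0}^{n-1}h_i^2$, followed by the splitting of $p_{j-1}(\lam)/(v-\lam)$ into difference quotient plus $p_{j-1}(v)/(v-\lam)$, which exhibits the matrix as strictly lower triangular plus rank one. All the delicate steps check out: the difference quotient is a polynomial of degree $j-2$ in $\lam$ with leading coefficient $-1$, so biorthogonality does give $A_{jk}=0$ for $k\geq j$ and $A_{j,j-1}=-h_{j-2}^2$; since $p_0\equiv 1$ and the first row of $A$ vanishes, your row reduction legitimately strips the rank-one part; the cofactor expansion along the last column gives $\det M=\Q_{n-1}(v)\prod_{i=0}^{n-2}h_i^2$ with the signs $(-1)^{n+1}(-1)^{n-1}=1$ cancelling as claimed; and the $(0,0,0,1)$ case indeed follows from the manifest symmetry of the weight $e^{-V(x)-W(y)+\tau xy}$ under $(V,x,p,Q,\Q)\leftrightarrow(W,y,q,P,\P)$, with $h_i^2$ invariant. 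As for what each approach buys: yours is elementary and self-contained, essentially a two-matrix analogue of the Andr\'eief/Heine-type manipulations of Br\'ezin--Hikami and Baik--Deift--Strahov for the $1$-matrix model --- which is noteworthy given the paper's remark that a straightforward generalization of the $1$-matrix arguments \lq\lq is not available\rq\rq\ (your computation shows the Andr\'eief route does work, at least for a single source in the denominator); the paper's route buys uniformity with the RH framework exploited elsewhere (e.g.\ Remark~\ref{remark:RHkernels}) and expresses the answer through RH entries, which is the natural starting point for asymptotic analysis.
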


Theorems~\ref{theorem:avpol1} and \ref{theorem:avpol2} will be proved in
Section~\ref{subsection:proof:M1}. We note that the proofs will be quite
different from the proofs of the corresponding statements for the Hermitian
$1$-matrix model, see \cite{BDS,BS,BH3,FS,SF}. The latter proofs typically use
a Heine-type representation of the orthogonal polynomials on the real line in
terms of a moment (Hankel) determinant \cite{Sz}. Such formulas also exists in
the biorthogonal case \cite{EMcL}, but it seems that a straightforward
generalization of the arguments for the $1$-matrix case is not available.
Instead, our proofs of Theorems~\ref{theorem:avpol1} and \ref{theorem:avpol2}
will rely on the \emph{multiple} orthogonality properties of $p_n$ and $q_n$
due to Kuijlaars-McLaughlin \cite{KMcL}.

Next we consider the case where there are exactly two external sources in
\eqref{avcharpol}.

\begin{theorem}\label{theorem:avpol:K} (Two external sources:)
With $(I,J,K,L)$ equal to $(1,0,1,0)$, $(1,1,0,0)$, $(0,0,1,1)$ or $(0,1,0,1)$
in \eqref{avcharpol}, we have \begin{eqnarray*} P_{n}^{[1,0,1,0]}(x,v) &=&
(x-v)\K_{1,1}(x,v),
\\
 P_{n}^{[1,1,0,0]}(x,y) &=& h_{n}^2\K_{1,2}(x,y),\\
 P_{n}^{[0,0,1,1]}(v,w) &=& -h_{n-1}^{-2}\K_{2,1}(w,v),\\
 P_{n}^{[0,1,0,1]}(y,w) &=& (y-w)\K_{2,2}(w,y),
\end{eqnarray*}
where we use the notations in \eqref{def:hk} and
\eqref{def:K11sum}--\eqref{def:K22sum}, but with the index $n$ in
\eqref{def:K12sum} and \eqref{def:K21sum} replaced by $n+1$ and $n-1$
respectively.
\end{theorem}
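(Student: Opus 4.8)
The plan is to compute each of the four averages in Theorem~\ref{theorem:avpol:K} by expressing the integral \eqref{avcharpol:Harish} over eigenvalues as a determinant, and then matching it against the summation formulas for the kernels $\K_{i,j}$ provided by Lemma~\ref{lemma:sumformulas}. The guiding principle throughout is that one extra source in the numerator (an $x_i$ or $y_j$) raises the effective degree by one, so the relevant biorthogonal objects should be indexed by $n+1$ rather than $n$, while one extra source in the denominator (a $v_k$ or $w_l$) produces a Cauchy-type transform and lowers the index by one (hence the $n-1$ and $n+1$ shifts mentioned in the theorem). I would treat the four cases by pairing each ``numerator source'' with the polynomial expansion and each ``denominator source'' with the corresponding Cauchy transform $\P$ or $\Q$.

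For the mixed case $P_n^{[1,0,1,0]}(x,v)$, which involves $\det(xI_n-M_1)$ in the numerator and $\det(vI_n-M_1)$ in the denominator (both for $M_1$), I would first insert the factor $\prod_m \frac{x-\lambda_m}{v-\lambda_m}$ into the integral \eqref{avcharpol:Harish}. Using the standard device of expanding $\Delta(\mathbf\lambda)\cdot\prod_m(x-\lambda_m)$ as a Vandermonde determinant of one larger size in the variables $(\lambda_1,\dots,\lambda_n,x)$, and handling the denominator $\prod_m(v-\lambda_m)^{-1}$ by a partial-fraction/Cauchy-transform argument, the integral collapses to a single-index formula. After integrating out all $\mu_j$ via the biorthogonality relations \eqref{def:pnqn} and the definitions \eqref{def:Pn}--\eqref{def:Qn}, I expect the result to reduce to the sum $\sum_{i=0}^{n-1} h_i^{-2} p_i(x)\Q_i(v)$ together with a leading contribution from the $x$-source that promotes the sum and a residual $-1/(v-x)$ term; comparison with \eqref{def:K11sum} then gives $P_n^{[1,0,1,0]}(x,v)=(x-v)\K_{1,1}(x,v)$. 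The other three cases follow by the same scheme with the roles of $M_1/M_2$ and of numerator/denominator permuted, using \eqref{def:K12sum}--\eqref{def:K22sum} respectively; the pure-numerator case $P_n^{[1,1,0,0]}(x,y)$ should come out cleanest, yielding $h_n^2\K_{1,2}(x,y)$ directly from the Christoffel--Darboux-type structure, and the pure-denominator case $P_n^{[0,0,1,1]}(v,w)$ should dualize it, producing $-h_{n-1}^{-2}\K_{2,1}(w,v)$.

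An alternative and probably cleaner route, which I would pursue in parallel, is to build on the single-source results already established in Theorems~\ref{theorem:avpol1} and \ref{theorem:avpol2} rather than redo the eigenvalue integral from scratch. The idea is to interpret the two-source average as a single-source average taken in a once-perturbed model, and to exploit the multiple-orthogonality characterization of $p_n$ and $q_n$ from Kuijlaars--McLaughlin \cite{KMcL} that underlies those theorems. In this framework, adding one source of the opposite type (e.g.\ an $x$-source on top of an existing $v$-source) modifies the multiple-orthogonality weight in a controlled way, and the resulting object is exactly the next biorthogonal function paired with the appropriate transform. This is the strategy that generalizes most transparently to arbitrary $(I,J,K,L)$ later in the paper, so establishing it carefully here pays off.

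The main obstacle, I expect, is the bookkeeping of the index shifts together with the correct isolation of the inhomogeneous term. The summation formulas in Lemma~\ref{lemma:sumformulas} each carry a non-sum piece --- a $-1/(v-x)$ term in \eqref{def:K11sum}, a double-integral term in \eqref{def:K21sum}, a $-1/(w-y)$ term in \eqref{def:K22sum} --- and the delicate point is to show that the eigenvalue integral reproduces precisely that piece (and no extraneous constant). Concretely, the difficulty is to verify that the ``boundary'' contribution arising when the promoted highest-degree term meets the Cauchy transform in the denominator variable assembles into exactly the second term of the relevant formula; tracking the convergence and the correct $\cee\setminus\er$ placement of $v$ and $w$ (which guarantees the Cauchy transforms are well defined) is where the argument must be most careful. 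Once the inhomogeneous term is matched, the prefactors $(x-v)$, $h_n^2$, $-h_{n-1}^{-2}$, and $(y-w)$ are forced by comparing the normalizations in \eqref{def:hk} against the leading coefficients produced by the Vandermonde expansion.
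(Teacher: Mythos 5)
Your ``alternative route'' is in fact the paper's actual route: in the paper, Theorem~\ref{theorem:avpol:K} is not proved by a separate computation but is read off as the $1\times 1$ specialization of the general Theorem~\ref{theorem:avpol:gen} (e.g.\ $(I,J,K,L)=(1,0,1,0)$ gives prefactor $(x_1-v_1)$, constant $C=1$, and the single entry $\K_{1,1}(x_1,v_1)$, with the index shifts $n\pm1$ in $\K_{1,2}$ and $\K_{2,1}$ coming from the allowed range $p\in\{J-L,\ldots,I-K\}$). Theorem~\ref{theorem:avpol:gen} in turn is proved exactly by your second strategy: the case $[1,1,0,0]$ comes from the Christoffel-type Theorem~\ref{theorem:avpol:num}, and the other three cases are reached by the transformations \eqref{stepinduc1}--\eqref{stepinduc2}, which re-express the multi-source average as a single-source average in the once-perturbed model \eqref{wtilUv} via Theorems~\ref{theorem:avpol1} and \ref{theorem:avpol2}, as in \eqref{mechanism:induction}. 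So at the strategic level you have located the right mechanism, and your index-shift heuristic (numerator source raises, denominator source lowers) correctly predicts the replacements $n\to n+1$ in \eqref{def:K12sum} and $n\to n-1$ in \eqref{def:K21sum}.

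Nevertheless the attempt has genuine gaps on both tracks. Your primary route --- a direct Heine-type computation in \eqref{avcharpol:Harish}, augmenting the Vandermonde and then ``integrating out all $\mu_j$ via the biorthogonality relations'' --- is precisely the approach the paper flags as unavailable: the interaction factor $\det(e^{\tau\lam_i\mu_j})$ prevents a straightforward generalization of the $1$-matrix Hankel-determinant arguments, and this is why the paper instead proves even the single-source Theorems~\ref{theorem:avpol1}--\ref{theorem:avpol2} through the Kuijlaars--McLaughlin Riemann--Hilbert characterization. You give no mechanism for performing the $\mu$-integrals against $\det(e^{\tau\lam_i\mu_j})$ in the presence of the denominator $\prod_m(v-\lam_m)^{-1}$; the sentence ``I expect the result to reduce to the sum $\sum_{i=0}^{n-1}h_i^{-2}p_i(x)\Q_i(v)$ together with \ldots a residual $-1/(v-x)$ term'' is exactly the assertion that needs proof. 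On the second route, the substantive content is likewise missing: to identify the perturbed model's biorthogonal polynomial and its Cauchy transform with the kernel combinations of Lemma~\ref{lemma:sumformulas}, the paper needs the partial-fraction decompositions \eqref{pfd1}--\eqref{pfd2}, the reproducing property of $\K_{1,2}$ (Lemma~\ref{lemma:K12repr}) and the vanishing properties of $\K_{1,1}$, $\K_{2,2}$ (Lemma~\ref{lemma:vanishing}) to kill the spurious determinant pieces, the integral relations of Lemma~\ref{lemma:kernel:integrate} to recognize $\K_{2,1}$, $\K_{2,2}$ and $\P_n$ after the transformation, and the large-variable asymptotics of Lemma~\ref{lemma:leadingterms} to pin down the constants $h_n^2$ and $-h_{n-1}^{-2}$; the inhomogeneous term $-1/(v-x)$ in \eqref{def:K11sum} itself rests on the normalization identity \eqref{proof:repr1}. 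You correctly name these as ``the main obstacle'' but supply none of the corresponding arguments, so what you have is a sound plan aligned with the paper's proof architecture rather than a proof.
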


Theorem~\ref{theorem:avpol:K} is a consequence of the more general
Theorem~\ref{theorem:avpol:gen} below.

Finally we consider the general case in \eqref{avcharpol}. We first state the
result when all external sources are in the numerator.

\begin{theorem}\label{theorem:avpol:num} (External sources in numerator only:)
Assume that $K=L=0$ in \eqref{avcharpol}, i.e., \eqref{avcharpol} has external
sources in the numerator only. Also suppose that $I\geq J$. Then
\begin{multline}\label{matrix:num}
P_n^{[I,J,0,0]}(x_1,\ldots,x_I;y_1,\ldots,y_J) =
\frac{\prod_{i=0}^{J-1}h_{n+i}^2}{\prod_{1\leq
i<j\leq I}(x_j-x_i)\prod_{1\leq i<j\leq J}(y_j-y_i)}\\
\times\det\begin{pmatrix}
\K_{1,2}(x_1,y_1) & \ldots & \K_{1,2}(x_1,y_J) & p_{n+J}(x_{1}) & \ldots & p_{n+I-1}(x_{1})\\
\vdots & & \vdots & \vdots & & \vdots\\
\K_{1,2}(x_I,y_1) & \ldots & \K_{1,2}(x_I,y_J)& p_{n+J}(x_{I}) & \ldots &
p_{n+I-1}(x_{I})
\end{pmatrix}.
\end{multline}
Here  $\K_{1,2}$ is as defined in \eqref{def:K12sum} but with $n$ replaced by
$n+p$ where $p\in\{J,\ldots,I\}$ may be any fixed number.
\end{theorem}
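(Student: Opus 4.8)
The plan is to start from the Harish--Chandra representation \eqref{avcharpol:Harish} with $K=L=0$ and to absorb the source factors into enlarged Vandermonde determinants. Using the elementary identity $\prod_{i=1}^I\prod_{m=1}^n(x_i-\lam_m)\,\Delta(\lam_1,\dots,\lam_n) = \Delta(\lam_1,\dots,\lam_n,x_1,\dots,x_I)/\Delta(x_1,\dots,x_I)$ (and the analogous one in the $\mu,y$ variables), the integrand becomes a ratio whose numerator is a product of two extended Vandermonde determinants of sizes $n+I$ and $n+J$, whose denominator supplies exactly the factors $\prod_{i<j}(x_j-x_i)$ and $\prod_{i<j}(y_j-y_i)$ of the theorem, and in which the coupling $\det(e^{\tau\lam_i\mu_j})_{i,j=1}^n$ is untouched. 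I would then write each extended Vandermonde as a determinant of biorthogonal polynomials, $\Delta(\lam_1,\dots,\lam_n,x_1,\dots,x_I)=\det(p_{b-1}(z_a))$ with $z=(\lam_1,\dots,\lam_n,x_1,\dots,x_I)$, and similarly using the $q_{b-1}$ for the $\mu,y$ block, since monic-polynomial row reduction leaves a Vandermonde unchanged.

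Next I would Laplace-expand the two polynomial determinants along the \emph{source rows} (the $x$-rows and the $y$-rows), which are not integrated. This expresses each as a signed sum over the degree set used by those rows: the $x$-rows select an $I$-subset $D_x\subseteq\{0,\dots,n+I-1\}$ and the $y$-rows a $J$-subset $D_y\subseteq\{0,\dots,n+J-1\}$, leaving $n\times n$ minors $M_\lam(\mathcal T)$ and $M_\mu(\mathcal U)$ in the integration variables, where $\mathcal T,\mathcal U$ are the complementary degree sets. The $\lam,\mu$-integral then factorizes off: integrating first over $\mu$ and then over $\lam$ by the Andr\'eief (Gram--Heine) identity, and using the biorthogonality relation $\int\!\int p_i(x)q_j(y)e^{-V(x)-W(y)+\tau xy}\,dx\,dy = h_i^2\delta_{ij}$ of \eqref{def:hk}, collapses this to $\mathrm{(const)}\cdot\delta_{\mathcal T,\mathcal U}\prod_{d\in\mathcal T}h_d^2$. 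The same combinatorial constant appears in $\til Z_n$, so it cancels against the prefactor $1/\til Z_n$, leaving a clean sum over degree configurations weighted by $1/\prod_d h_d^2$.

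The biorthogonality has forced $\mathcal T=\mathcal U$, i.e. the $\lam$- and $\mu$-rows must use the \emph{same} set of $n$ degrees; since these degrees must lie in $\{0,\dots,n+J-1\}$ (the $\mu$-block only offers degrees up to $n+J-1$) and $I\ge J$, the top degrees $n+J,\dots,n+I-1$ are necessarily assigned to the $x$-rows, while the remaining shared low degrees form a common $J$-set $D\subseteq\{0,\dots,n+J-1\}$ used by both the $x$- and $y$-source rows. The sum thus reduces to a sum over such $D$ of products of an $x$-minor on columns $D\cup\{n+J,\dots,n+I-1\}$ and a $y$-minor on columns $D$, divided by $\prod_{d\in D}h_d^2$, with overall prefactor $\prod_{i=0}^{J-1}h_{n+i}^2$ as claimed. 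Finally I would recognize this sum as the stated $I\times I$ determinant by running the Cauchy--Binet step in reverse: substituting $\K_{1,2}(x,y)=\sum_{i=0}^{n+p-1}h_i^{-2}p_i(x)q_i(y)$ into the first $J$ columns and expanding by multilinearity reproduces exactly the sum over $D$. The freedom in $p\in\{J,\dots,I\}$ is transparent here: the extra summands with an index in $\{n+J,\dots,n+p-1\}$ produce a column equal to one of the fixed columns $p_{n+J},\dots,p_{n+I-1}$ and hence vanish.

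The main obstacle is bookkeeping rather than analysis: one must track the signs produced by the two Laplace expansions and by the reordering of degrees in $M_\lam,M_\mu$, and check that they match precisely the sign of the Cauchy--Binet expansion of the target determinant (equivalently, that the induced ordering of $D\cup\{n+J,\dots,n+I-1\}$ is the increasing one). I would also verify once and for all the two-step Andr\'eief evaluation together with the cancellation of the combinatorial constant against $\til Z_n$, so that no spurious factorials survive; this is routine but must be done carefully, and the hypothesis $I\ge J$ (rather than general $I,J$) is exactly what makes the forced high-degree assignment unambiguous.
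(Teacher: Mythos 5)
Your proposal is correct, but it takes a genuinely different route from the paper. The paper's proof (Section~\ref{subsection:proof:num}, following Akemann et al.\ \cite{ADOS}) is inductive: it first establishes two Christoffel-type formulas, Propositions~\ref{prop:Chris1} and \ref{prop:Chris2}, which express the monic biorthogonal polynomials for the source-perturbed weights $\til w(x,y)$ as determinants built from the unperturbed $p_k$ and $\K_{1,2}$, and then adds the external sources one at a time, applying Theorem~\ref{theorem:avpol1} to the perturbed weight at each step. You instead expand everything in one shot: absorbing the sources into enlarged Vandermondes in \eqref{avcharpol:Harish}, Laplace-expanding along the non-integrated source rows, evaluating the bulk integral by a two-step Andr\'eief argument in which the biorthogonality \eqref{def:pnqn}, \eqref{def:hk} forces $\mathcal{T}=\mathcal{U}$, and reassembling the surviving sum over the common degree set $D$ via Cauchy--Binet. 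The intermediate identities you invoke all check out: the normalization $\til Z_n=(n!)^2\prod_{d=0}^{n-1}h_d^2$ follows from the same Andr\'eief computation and yields exactly the prefactor $\prod_{i=0}^{J-1}h_{n+i}^2$ after dividing $\prod_{t\in\mathcal{U}}h_t^2$ by it; your explanation of where $I\geq J$ enters (the degrees $n+J,\ldots,n+I-1$ must go to the $x$-rows since the $\mu$-block caps degrees at $n+J-1$) and of the freedom $p\in\{J,\ldots,I\}$ is the correct one; and the sign bookkeeping you flag is genuinely routine here, since every element of $D$ is below $n+J$, so the induced ordering of $D\cup\{n+J,\ldots,n+I-1\}$ is automatically increasing. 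As for what each approach buys: yours is self-contained and, as the special case $I=1$, $J=0$, even reproves $P_n^{[1,0,0,0]}=p_n$, which the paper obtains through the much heavier Riemann--Hilbert machinery behind Lemma~\ref{lemma:RHchar}; the paper's inductive Christoffel route avoids all multi-index combinatorics and, more importantly, its one-source-at-a-time mechanism is designed to extend to sources in the denominator in the proof of Theorem~\ref{theorem:avpol:gen}, where your Vandermonde-extension trick is unavailable because the inserted factors are no longer polynomial.
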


Theorem~\ref{theorem:avpol:num} will be proved in
Section~\ref{subsection:proof:num}. The proof will be based on formulas of
Christoffel type \cite{BDS,Sz}. A close variant to
Theorem~\ref{theorem:avpol:num} was obtained by Akemann et al. in \cite{ADOS}.

In the statement of Theorem~\ref{theorem:avpol:num} we assume that the kernel
$\K_{1,2}$ is defined in \eqref{def:K12sum} but with $n$ replaced by $n+p$ for
an arbitrary but fixed number $p\in\{J,\ldots,I\}$. The fact that all values of
$p\in\{J,\ldots,I\}$ give the same determinant can be seen by applying
elementary column operations to \eqref{matrix:num}. More precisely, to the
$j$th column of \eqref{matrix:num}, $j\in\{1,\ldots,J\}$, we may add a suitable
linear combination of the last $I-J$ columns.

Now we turn to the general case.

\begin{theorem}\label{theorem:avpol:gen} (The general case:)
Suppose that $I-K\geq J-L\geq -n$. Then
\begin{multline}\label{gencase:a}
P_n^{[I,J,K,L]}(x_1,\ldots,x_I;y_1,\ldots,y_J;v_1,\ldots,v_K;w_1,\ldots,w_L) \\
= C\frac{\prod_{i,k}(x_i-v_k)}{\prod_{1\leq i<j\leq I}(x_j-x_i)\prod_{1\leq
k<l\leq K}(v_k-v_l) }\frac{\prod_{j,l}(y_j-w_l)}{\prod_{1\leq i<j\leq
J}(y_j-y_i)\prod_{1\leq k<l\leq L}(w_k-w_l)}\\
\times \det\begin{pmatrix}
\K_{1,1}(x_1,v_1) & \ldots & \K_{1,1}(x_I,v_1) & \K_{2,1}(w_1,v_1) & \ldots & \K_{2,1}(w_L,v_1) \\
\vdots & & \vdots & \vdots & & \vdots\\
\K_{1,1}(x_1,v_K) & \ldots & \K_{1,1}(x_I,v_K) & \K_{2,1}(w_1,v_K) & \ldots & \K_{2,1}(w_L,v_K) \\
\K_{1,2}(x_1,y_1) & \ldots & \K_{1,2}(x_I,y_1) & \K_{2,2}(w_1,y_1) & \ldots & \K_{2,2}(w_L,y_1) \\
\vdots & & \vdots & \vdots & & \vdots \\
\K_{1,2}(x_1,y_J) & \ldots & \K_{1,2}(x_I,y_J) & \K_{2,2}(w_1,y_J) & \ldots & \K_{2,2}(w_L,y_J) \\
p_{n+J-L}(x_1) & \ldots & p_{n+J-L}(x_I) & \P_{n+J-L}(w_1) & \ldots & \P_{n+J-L}(w_L) \\
\vdots & & \vdots & \vdots & & \vdots \\
p_{n+I-K-1}(x_1) & \ldots & p_{n+I-K-1}(x_I) & \P_{n+I-K-1}(w_1) & \ldots &
\P_{n+I-K-1}(w_L)
\end{pmatrix},
\end{multline}
with $$ C = \left\{\begin{array}{ll} (-1)^{(I+K)L}\prod_{i=0}^{J-L-1}
h^2_{n+i},&\quad\textrm{if $J-L\geq 0$},\\
\vphantom{\frac{1^1_1}{1^1_1}}(-1)^{(I+K)L}\prod_{i=J-L}^{-1}
h^{-2}_{n+i},&\quad\textrm{if $J-L< 0$}.
\end{array}\right.$$
Here the kernels $\K_{i,j}$ are as defined in
\eqref{def:K11sum}--\eqref{def:K22sum} but with $n$ replaced by $n+p$ where
$p\in\{J-L,\ldots,I-K\}$ may be any fixed number.

Similarly, if $J-L\geq I-K\geq -n$ then
\begin{multline}\label{gencase:b}
P_n^{[I,J,K,L]}(x_1,\ldots,x_I;y_1,\ldots,y_J;v_1,\ldots,v_K;w_1,\ldots,w_L) \\
= C\frac{\prod_{i,k}(x_i-v_k)}{\prod_{1\leq i<j\leq I}(x_j-x_i)\prod_{1\leq
k<l\leq K}(v_k-v_l)
}\frac{\prod_{j,l}(y_j-w_l)}{\prod_{1\leq i<j\leq J}(y_j-y_i)\prod_{1\leq k<l\leq L}(w_k-w_l)}\\
\times \det\begin{pmatrix}
\K_{1,1}(x_1,v_1) & \ldots & \K_{1,1}(x_1,v_K) & \K_{1,2}(x_1,y_1) & \ldots & \K_{1,2}(x_1,y_J) \\
\vdots & & \vdots & \vdots & & \vdots\\
\K_{1,1}(x_I,v_1) & \ldots & \K_{1,1}(x_I,v_K) & \K_{1,2}(x_I,y_1) & \ldots & \K_{1,2}(x_I,y_J) \\
\K_{2,1}(w_1,v_1) & \ldots & \K_{2,1}(w_1,v_K) & \K_{2,2}(w_1,y_1) & \ldots & \K_{2,2}(w_1,y_J) \\
\vdots & & \vdots & \vdots & & \vdots \\
\K_{2,1}(w_L,v_1) & \ldots & \K_{2,1}(w_L,v_K) & \K_{2,2}(w_L,y_1) & \ldots & \K_{2,2}(w_L,y_J) \\
\Q_{n+I-K}(v_1) & \ldots & \Q_{n+I-K}(v_K) & q_{n+I-K}(y_1) & \ldots & q_{n+I-K}(y_J) \\
\vdots & & \vdots & \vdots & & \vdots \\
\Q_{n+J-L-1}(v_1) & \ldots & \Q_{n+J-L-1}(v_K) & q_{n+J-L-1}(y_1) & \ldots &
q_{n+J-L-1}(y_J)
\end{pmatrix},
\end{multline}
with $$ C = \left\{\begin{array}{ll} (-1)^{(I+K)L}\prod_{i=0}^{I-K-1}
h^2_{n+i},&\quad\textrm{if $I-K\geq 0$},\\
\vphantom{\frac{1^1_1}{1^1_1}}(-1)^{(I+K)L}\prod_{i=I-K}^{-1}
h^{-2}_{n+i},&\quad\textrm{if $I-K< 0$}.
\end{array}\right.$$
Here $\K_{i,j}$ are as in \eqref{def:K11sum}--\eqref{def:K22sum} but with $n$
replaced by $n+p$ for any arbitrary but fixed number $p\in\{I-K,\ldots,J-L\}$.
\end{theorem}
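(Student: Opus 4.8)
The plan is to evaluate the eigenvalue integral~\eqref{avcharpol:Harish} directly, folding all four families of source factors into the two Vandermonde determinants and then integrating out the eigenvalues to obtain a single determinant. First I would replace $\Delta(\mathbf\lam)$ and $\Delta(\mathbf\mu)$ by the monic determinants $\det(p_{a-1}(\lam_b))_{a,b=1}^n$ and $\det(q_{c-1}(\mu_d))_{c,d=1}^n$, which is legitimate because $p_i$ and $q_j$ are monic. The numerator factors $\prod_{m,i}(x_i-\lam_m)$ and $\prod_{m,j}(y_j-\mu_m)$ are then absorbed by Christoffel-type Vandermonde-extension identities, which enlarge the polynomial families so that the shifted monic polynomials $p_n,p_{n+1},\dots$ and $q_n,q_{n+1},\dots$ appear evaluated at the external nodes; the denominator factors $\prod_{m,k}(v_k-\lam_m)^{-1}$ and $\prod_{m,l}(w_l-\mu_m)^{-1}$ are absorbed by Cauchy--Vandermonde determinants, which introduce the Cauchy kernels $1/(v_k-\lam_m)$ and $1/(w_l-\mu_m)$ as extra rows. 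The products $\prod_{i<j}(x_j-x_i)$, $\prod_{k<l}(v_k-v_l)$, $\prod_{i,k}(x_i-v_k)$ and their $(y,w)$-counterparts then appear exactly as the normalizations of these Cauchy--Vandermonde evaluations, which is why they occur as the prefactors in~\eqref{gencase:a}--\eqref{gencase:b}.

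After this step the integrand is a product of a $\lam$-determinant (rows indexed by basis functions, columns by the $\lam_m$ together with the external $x_i$ and $v_k$) and a $\mu$-determinant of the same shape, multiplied by the coupling $\det(e^{\tau\lam_i\mu_j})$ and the weights. I would integrate out the eigenvalues in two Andr\'eief steps (the continuous Cauchy--Binet formula): first over the $\mu_d$ against $\det(e^{\tau\lam_i\mu_j})\prod_d e^{-W(\mu_d)}$, then over the $\lam_b$ against the resulting determinant times $\prod_b e^{-V(\lam_b)}$. Biorthogonality~\eqref{def:pnqn}--\eqref{def:hk} collapses the pure index--index block to diagonal $h_i^2$ terms, which produce the product $\prod h_{n+i}^2$ inside the constant $C$; the entries coupling an external $x_i$ or $v_k$ to an external $y_j$ or $w_l$ reproduce precisely the four kernels $\K_{1,2}(x_i,y_j)$, $\K_{1,1}(x_i,v_k)$, $\K_{2,2}(w_l,y_j)$ and $\K_{2,1}(w_l,v_k)$ through their summation formulas in Lemma~\ref{lemma:sumformulas}; and the entries coupling an external node to a leftover index give the bare functions $p_{n+r}(x_i)$ and $\P_{n+r}(w_l)$ for the appropriate shifts $r$ (respectively $q_{n+r}$ and $\Q_{n+r}$). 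The transforms $\P_i$, $\Q_j$ and the transformed kernels $\K_{i,j}$ arise here precisely because the denominator sources entered as Cauchy kernels $1/(v_k-\lam)$ and $1/(w_l-\mu)$, which pass through the integration to become the Cauchy transforms in~\eqref{def:Pn:til}--\eqref{def:K22til}.

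The two cases~\eqref{gencase:a} and~\eqref{gencase:b} are governed by the sign of the net imbalance $(I-K)-(J-L)$: after biorthogonality the number of leftover polynomial families on the $\lam$-side and on the $\mu$-side differs by exactly $|(I-K)-(J-L)|$, and the surplus must be filled either by the $M_1$-basis $p,\P$ (when $I-K\ge J-L$) or by the $M_2$-basis $q,\Q$ (when $J-L\ge I-K$); this simultaneously fixes the index ranges $n+J-L,\dots,n+I-K-1$ and $n+I-K,\dots,n+J-L-1$. The stated freedom to replace $n$ by $n+p$ for any $p$ in the admissible range inside the kernels reflects that one may add multiples of the leftover polynomial rows or columns to the kernel rows or columns without changing the determinant, exactly as observed after Theorem~\ref{theorem:avpol:num}. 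The numerator-only identity~\eqref{matrix:num} is the specialization $K=L=0$ and serves both as a consistency check and, through the Christoffel step alone, as a convenient starting point.

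I expect the main obstacle to be the bookkeeping of the coupled double integration rather than any single identity. Because the coupling $\det(e^{\tau\lam_i\mu_j})$ ties the $\lam$- and $\mu$-systems together, the intermediate object produced by the $\mu$-integration---essentially $\int q(\mu)e^{\tau\lam\mu-W(\mu)}\,d\mu$ and its Cauchy-transformed analogue---must be matched carefully against $P_i$, $Q_j$ and their transforms so that the second Andr\'eief step yields the kernels with the correct index shifts. Controlling the global sign $(-1)^{(I+K)L}$ and the case split in $C$, and checking that every admissible choice of basis-extension and of $p$ gives the same determinant, is the delicate part; the reproducing properties of the kernels $\K_{i,j}$ collected in Section~\ref{subsection:reproducing} should be what makes the final identification of entries clean.
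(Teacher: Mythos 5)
Your proposal is correct in outline, but it is a genuinely different route from the paper's. The paper proves \eqref{gencase:a} by induction: starting from the numerator-only case (Theorem~\ref{theorem:avpol:num}, itself proved via Christoffel-type formulas), it applies source-transporting steps $(I,J,K,L)\to(I-1,J,K,L+1)$ and $(I,J-1,K+1,L)$, realizing each step as a Cauchy-transform integral of the previous average against a deformed weight, then using the partial fraction decompositions \eqref{pfd1}--\eqref{pfd2} together with Lemmas~\ref{lemma:K12repr}--\ref{lemma:kernel:integrate} to show that of the three resulting determinants only one survives and its new column consists exactly of $\K_{2,1}(w_{L+1},\cdot)$, $\K_{2,2}(w_{L+1},\cdot)$, $\P_{\bullet}(w_{L+1})$; the constants are then pinned down by matching leading asymptotics in $w_{L+1}$ and $x_I$ via Lemma~\ref{lemma:leadingterms}. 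You instead propose the direct Borodin--Strahov/Berg\`ere-style evaluation: encode all sources via Cauchy--Vandermonde and Vandermonde-extension identities (which indeed produce the stated prefactors as normalizations), integrate out the eigenvalues, and let biorthogonality plus elimination of the diagonal $h_i^2$-block generate the kernels and the constant $C$. This buys a one-shot, case-symmetric derivation of both \eqref{gencase:a} and \eqref{gencase:b} that does not lean on Theorems~\ref{theorem:avpol1}--\ref{theorem:avpol:num}, and it makes the origin of the prefactors and of $\prod h_{n+i}^{\pm 2}$ transparent. Two cautions on where your real work lies: your second integration is not a plain Andr\'eief/Cauchy--Binet step, since after the $\mu$-integration \emph{both} determinants (sizes $n+I$ and $n+J$) exceed the number $n$ of remaining integration variables, so you need the bordered de Bruijn-type identity producing a block determinant of size $n+I+J$ with a zero corner block (this is the master formula underlying \cite{BS,KG}); and the kernels $\K_{i,j}$ are not raw post-integration entries but arise only after a Schur-complement reduction of the $h^2$-diagonal block, whereby e.g.\ the border entry $1/(v_k-x_i)$ combines with the correction $\sum_i h_i^{-2}p_i(x)\Q_i(v)$ to give \eqref{def:K11sum} --- your appeal to Lemma~\ref{lemma:sumformulas} shows you see this, but it should be made explicit, as should the sign $(-1)^{(I+K)L}$. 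Reassuringly, the kernel index your reduction naturally produces is $p=J-L$, an admissible choice within the stated freedom, consistent with the remark following the theorem.
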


Note that it would be more natural to write the matrix in \eqref{gencase:a} in
its transposed form; we have written it in its present form because of
typographical reasons.

Theorem~\ref{theorem:avpol:gen} will be proved in
Section~\ref{subsection:proof:gen}. The proof will use
Theorem~\ref{theorem:avpol:num} together with a mechanism to transform an
external source in the matrix $M_1$ (or $M_2$) in the \emph{numerator}, into an
external source in the matrix $M_2$ (or $M_1$ respectively) in the
\emph{denominator}.

In \eqref{gencase:a} we assume that the kernels $\K_{i,j}$ are defined in
\eqref{def:K11sum}--\eqref{def:K22sum} but with $n$ replaced by $n+p$ for an
arbitrary but fixed number $p\in\{J-L,\ldots,I-K\}$. The fact that all values
of $p\in\{J-L,\ldots,I-K\}$ give the same determinant can be seen by applying
elementary row operations to \eqref{gencase:a}. More precisely, to the $j$th
row of \eqref{gencase:a}, $j\in\{1,\ldots,J+K\}$, we may add a suitable linear
combination of rows $J+K+1,\ldots,I+L$, taking into account
\eqref{def:K11sum}--\eqref{def:K22sum}. A similar remark applies to
\eqref{gencase:b}.

In the special case where $J=L=0$ in \eqref{avcharpol}, i.e., when there are
external sources in the matrix $M_1$ only, then Theorem~\ref{theorem:avpol:gen}
could be obtained from the Riemann-Hilbert characterization in
Section~\ref{subsection:proof:num} together with the results in \cite{Del}; see
also Remark~\ref{remark:RHkernels} in Section~\ref{subsection:proof:num}. A
similar remark holds of course if $I=K=0$, i.e., when there are external
sources in the matrix $M_2$ only. We will not use this approach however in the
proofs.
\smallskip

In Theorem~\ref{theorem:avpol:gen} we assume that
\begin{equation}\label{assumption:mild}\min(I-K,J-L)\geq -n.\end{equation}
We believe that this assumption is sufficiently mild to cover most of the
applications. We also expect that similar determinantal formulae may exist if
the assumption \eqref{assumption:mild} fails. See \cite{Bergere2} for the
statement of such formulae in a similar context, requiring a lot of case
distinctions to be made. We will not address this issue in this paper.

We wish to point out that the above formulas have a lot of similarity with
those obtained in the physical papers \cite{Bergere,Bergere2} by Berg\`ere in
the context of the \emph{normal $1$-matrix model} (or more general complex
matrix models), see also \cite{AP,AV}. Recall that the normal $1$-matrix model
is defined by the probability distribution
\begin{equation}\label{normalmatrixmodel}\frac{1}{Z_n} e^{-\textrm{Tr} (V(M,M^*))}\ dM,
\end{equation}
on the space $\mathcal N_n$ of all normal $n\times n$ matrices $M$. Here the
superscript ${}^*$ denotes the Hermitian conjugate, and the potential function
$V$ in \eqref{normalmatrixmodel} is such that its trace is \emph{real-valued}
and depends only on the eigenvalues of $M$. 
In this context, one has the following analogue of \eqref{inteigs}:
\begin{multline}\label{normal:inteigs}
\frac{1}{Z_n}\int_{\mathcal N_n}\! f(M)g(M^*) e^{-\textrm{Tr} (V(M,M^*))}\ dM
\\ = \frac{1}{\til Z_n}\int_{-\infty}^{\infty}\!\ldots\int_{-\infty}^{\infty}
\til f(\lam_1,\ldots,\lam_n)\til
g(\overline{\lam_1},\ldots,\overline{\lam_n})\prod_{i=1}^n \left(
e^{-V(\lam_i,\overline{\lam_i})}
\right)\Delta(\mathbf{\lam})\Delta(\mathbf{\overline{\lam}}) \prod_{i=1}^n
d\lam_i,
\end{multline}
for functions $f$ and $g$ which depend only on the eigenvalues and which are
such that the integrals converge. Here the bar denotes complex conjugation.

In the physical papers \cite{Bergere,Bergere2}, Berg\`ere obtains determinantal
formulae for averages of products and ratios of characteristic polynomials over
the normal matrix model \eqref{normalmatrixmodel} (and also more general
complex matrix models). His formulas are written as determinants constructed
from (bi)orthogonal polynomials and kernels in a similar way as in our results.
We should stress however that the two-matrix model \eqref{twomatrixmodel} and
the normal $1$-matrix model \eqref{normalmatrixmodel} are truly different
models, exhibiting very different properties. Viewed from this perspective, the
similarity of our results with those of Berg\`ere \cite{Bergere,Bergere2} may
appear rather remarkable. It would be interesting to obtain a better
understanding on this issue.

\subsection{Organization of this paper}

The rest of this paper is organized as follows. In Section~\ref{section:proofs}
we prove the main theorems. In Section~\ref{section:appl} we give two
applications: a proof of the Eynard-Mehta theorem for correlation functions,
and a formula for averages of products of traces. Finally,
Section~\ref{section:chain} contains some concluding remarks on the more
general model of matrices coupled in a chain.

\section{Proofs}
\label{section:proofs}

In this section we prove the main theorems. This section is organized as
follows. In Section~\ref{subsection:proof:lemma:sum} we prove
Lemma~\ref{lemma:sumformulas}. Section~\ref{subsection:proof:M1} contains the
proof of Theorems~\ref{theorem:avpol1} and \ref{theorem:avpol2}. In
Section~\ref{subsection:proof:num} we prove Theorem~\ref{theorem:avpol:num}. In
Section~\ref{subsection:reproducing} we discuss several properties of the
kernels $\K_{i,j}$. In Section~\ref{subsection:proof:gen} we use these
properties to prove the general result in Theorem~\ref{theorem:avpol:gen}.

\subsection{Proof of Lemma~\ref{lemma:sumformulas}}
\label{subsection:proof:lemma:sum}

Equations \eqref{def:K12sum} and \eqref{def:K21sum} are immediate by the
definitions. Now let us prove \eqref{def:K11sum}. Using the decomposition
$x-\xi = (x-v)+(v-\xi)$ in the integrand in \eqref{def:K11til}, we obtain
\begin{eqnarray}\nonumber
\K_{1,1}(x,v) &=& \int_{-\infty}^{\infty} \frac{1}{v-\xi}K_{1,1}(x,\xi)\ d\xi
-\frac{1}{v-x}\int_{-\infty}^{\infty} K_{1,1}(x,\xi)\
d\xi\\
&=& \label{def:K11til:bis} \int_{-\infty}^{\infty}
\frac{1}{v-\xi}K_{1,1}(x,\xi)\ d\xi - \frac{1}{v-x},
\end{eqnarray}
where we used that \begin{eqnarray} \nonumber\int_{-\infty}^{\infty}
K_{1,1}(x,\xi)\ d\xi &:=& \int_{-\infty}^{\infty} \sum_{i=0}^{n-1}
\frac{1}{h_i^2}p_i(x)Q_i(\xi)\ d\xi \\ &=& \nonumber
\int_{-\infty}^{\infty}\!\int_{-\infty}^{\infty} \sum_{i=0}^{n-1}
\frac{1}{h_i^2}p_i(x)q_i(\eta)e^{-V(\xi)-W(\eta)+\tau\xi\eta}\ d\xi\ d\eta \\
&=& \nonumber\sum_{i=0}^{n-1} p_i(x)
\left(\int_{-\infty}^{\infty}\!\int_{-\infty}^{\infty} \frac{1}{h_i^2} p_0(\xi)q_i(\eta)e^{-V(\xi)-W(\eta)+\tau\xi\eta}\ d\xi\ d\eta\right) \\
&=& \label{proof:repr1}1,
\end{eqnarray}
due to the biorthogonality relations and the fact that $p_0(\xi)\equiv 1$. Now
by inserting \eqref{def:K11} in \eqref{def:K11til:bis}, we find that
$$ \K_{1,1}(x,v) = \sum_{i=0}^{n-1} \frac{1}{h_i^2}p_i(x)\left(\int_{-\infty}^{\infty}
\frac{1}{v-\xi}Q_i(\xi)\ d\xi\right)-\frac{1}{v-x},
$$
which is \eqref{def:K11sum}.  The proof of \eqref{def:K22sum} is similar, this
time using the following analogue of \eqref{def:K11til:bis}:
\begin{equation} \label{def:K22til:bis} \K_{2,2}(w,y) =
\int_{-\infty}^{\infty}\frac{1}{w-\eta}K_{2,2}(\eta,y)\ d\eta-\frac{1}{w-y}.
\end{equation}
$\bol$

\begin{remark} (Reproducing properties of $K_{1,1}$ and $K_{2,2}$:) The argument
used to obtain \eqref{proof:repr1} in the above proof shows more generally that
\begin{equation}\label{K11:repr} \int_{-\infty}^{\infty} p(\xi) K_{1,1}(x,\xi)\ d\xi = p(x)
\end{equation}
and similarly \begin{equation}\label{K22:repr} \int_{-\infty}^{\infty} q(\eta)
K_{2,2}(\eta,y)\ d\eta = q(y)
\end{equation}
whenever $p$ (or $q$) is equal to one of the biorthogonal polynomials $p_i$ (or
$q_i$ respectively) with $i=0,\ldots,n-1$. By linearity,
\eqref{K11:repr}--\eqref{K22:repr} then remain true for any polynomials $p$ and
$q$ of degree at most $n-1$.

We also remark that in addition to \eqref{K11:repr}--\eqref{K22:repr}, there
are many other formulas of reproducing type for the kernels $K_{i,j}$,
$i,j=1,2$. We prefer to state them not all separately but will sometimes
encounter them in the proofs below.
\end{remark}

\subsection{Proofs of Theorems~\ref{theorem:avpol1} and \ref{theorem:avpol2}}
\label{subsection:proof:M1}

In this section we prove Theorems~\ref{theorem:avpol1} and
\ref{theorem:avpol2}. The proofs will use the Riemann-Hilbert characterization
of Kuijlaars-McLaughlin \cite{KMcL}.

We recall our assumption that the potential $W$ is a polynomial of even degree:
\begin{equation}\label{def:q} W(y) = c_q y^q+O(y^{q-2}),\qquad c_q>0,\qquad q:=\deg
W\in 2\mathbb N.\end{equation} Following \cite{KMcL}, we define the weight
functions
\begin{equation}\label{def:ws} w_j(x) := \int_{-\infty}^{\infty} y^j
e^{-V(x)-W(y)+\tau xy}\ dy,\qquad  j=0,\ldots,q-2.
\end{equation}
We stack these weight functions together in the row vector
\begin{equation}\label{weightmatrixrankone}
\mathbf w(x) := \begin{pmatrix}w_{0}(x) & \ldots & w_{q-2}(x)\end{pmatrix}.
\end{equation}

The Riemann-Hilbert problem (RH problem) of Kuijlaars-McLaughlin \cite{KMcL} is
as follows. See also \cite{BEH2,BHI,EMcL,Kap} for alternative Riemann-Hilbert
problems for the two-matrix model.

\begin{rhp}\label{RHP:Y}
We look for a matrix-valued function $Y(z)$ of size $q$ by $q$ such that
\begin{itemize}
\item[(1)] $Y(z)$ is analytic
in $\mathbb C \setminus\mathbb R $;
 \item[(2)] For $x\in\mathbb R $, it holds that
\begin{equation}\label{jumps:Y}
Y_{+}(x) = Y_{-}(x)
\begin{pmatrix} 1 & \mathbf w(x)\\
0 & I_{q-1}
\end{pmatrix},
\end{equation}
where $I_{q-1}$ denotes the identity matrix of size $q-1$, where $\mathbf w(x)$
is defined in \eqref{weightmatrixrankone}, and where the notation $Y_+(x),
Y_-(x)$ denotes the limit of $Y(z)$ with $z$ approaching $x\in\mathbb R $ from
the upper or lower half plane in $\mathbb C $, respectively;
  \item[(3)] As $z\to\infty$, we have that
\begin{equation}\label{asymptoticcondition:Y} Y(z) =
    (I_{q}+O(1/z))\diag(z^{n},z^{-n_0},\ldots,z^{-n_{q-2}}),
\end{equation}
where $n_k := \lfloor \frac{n+q-2-k}{q-1} \rfloor$, $k=0,\ldots,q-2$.
\end{itemize}
\end{rhp}

Similarly, we may consider the following dual RH problem.

\begin{rhp}\label{RHP:X}
We look for a matrix-valued function $X(z)$ of size $q$ by $q$ such that
\begin{itemize}
\item[(1)] $X(z)$ is analytic
in $\mathbb C \setminus\mathbb R $;
 \item[(2)] For $x\in\mathbb R $, it holds that
\begin{equation}\label{jumps:X}
X_{+}(x) = X_{-}(x)
\begin{pmatrix} I_{q-1} & \mathbf w^T(x)\\
0 & 1
\end{pmatrix},
\end{equation}
where the superscript ${}^T$ denotes the transpose;
  \item[(3)] As $z\to\infty$, we have that
\begin{equation}\label{asymptoticcondition:X} X(z) =
    (I_{q}+O(1/z))\diag(z^{n_0},\ldots,z^{n_{q-2}},z^{-n}).
\end{equation}
\end{itemize}
\end{rhp}

Partition the matrix $Y=Y(z)$ as
\begin{equation}\label{partition:Y} Y(z) =
\begin{array}{ll} & \begin{array}{ll} \ \ \ \ 1 & \ \ \ \ \ q-1 \end{array}\\
\begin{array}{c} 1 \\ q-1\end{array} \hspace{-4mm} & \begin{pmatrix}Y_{1,1}(z) & Y_{1,2}(z)\\ Y_{2,1}(z) &
Y_{2,2}(z)
\end{pmatrix},\end{array}
\end{equation}
where the partition is such that $Y_{1,1}$ has size $1\times 1$. Similarly,
partition the matrix $X=X(z)$ as
\begin{equation}\label{partition:X} X(z) =
\begin{array}{ll} & \begin{array}{ll} \ \ q-1 & \ \ \ \ \ \ \ 1 \end{array}\\
\begin{array}{c} q-1 \\ 1\end{array} \hspace{-4mm} & \begin{pmatrix}X_{1,1}(z) & X_{1,2}(z)\\ X_{2,1}(z) &
X_{2,2}(z)
\end{pmatrix},\end{array}
\end{equation}
%
where the partition is such that $X_{1,1}$ has size $(q-1)\times (q-1)$.

The solutions to the RH problems \ref{RHP:Y} and \ref{RHP:X} are related
through the formula
\begin{equation}\label{dual:partitioning} X = \begin{pmatrix}X_{1,1} & X_{1,2}\\ X_{2,1} &
X_{2,2}
\end{pmatrix} =
\begin{pmatrix} Y_{2,2} & -Y_{2,1} \\ -Y_{1,2} & Y_{1,1}
\end{pmatrix}^{-T},
\end{equation}
where the superscript ${}^{-T}$ denotes the inverse transpose. This was first
shown by Van Assche et al.~in \cite{VAGK}, see also \cite{AvMV,DK1}.

\begin{lemma}\label{lemma:RHchar}
With the partitions \eqref{partition:Y} and \eqref{partition:X} as described
above, we have that
\begin{equation}\label{RH:Y11} Y_{1,1}(z) = p_n(z),\end{equation} which is the
biorthogonal polynomial in \eqref{def:pnqn}, and
\begin{equation}\label{RH:X22} X_{2,2}(z) = \frac{1}{h^2_{n-1}}\Q_{n-1}(z), \end{equation}
where $h_{n-1}$ and $\Q_{n-1}$ are as defined in \eqref{def:hk} and
\eqref{def:Qn:til} respectively.
\end{lemma}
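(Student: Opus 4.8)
The plan is to read both identities directly off the analytic characterizations in RH problems \ref{RHP:Y} and \ref{RHP:X}, using throughout the Kuijlaars--McLaughlin correspondence \cite{KMcL} between the biorthogonality \eqref{def:pnqn} and the multiple orthogonality built from the weights \eqref{def:ws}.

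\emph{The identity \eqref{RH:Y11}.} First I would isolate the first column of the jump \eqref{jumps:Y}: it shows that $Y_{1,1}$ has no jump across $\er$, hence is entire, and the $(1,1)$ entry of \eqref{asymptoticcondition:Y} forces it to be monic of degree $n$. For the first row, the jump gives $(Y_{1,2})_+-(Y_{1,2})_- = Y_{1,1}(x)\,\mathbf w(x)$, so each entry $Y_{1,2,k}$ is the Cauchy transform of $Y_{1,1}w_k$; the off-diagonal decay $Y_{1,2,k}(z)=O(z^{-n_k-1})$ read off from \eqref{asymptoticcondition:Y} is equivalent to the vanishing moments $\int_{-\infty}^\infty Y_{1,1}(x)\,x^m w_k(x)\,dx=0$ for $0\le m\le n_k-1$, $0\le k\le q-2$. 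Since $\sum_k n_k=n$, these are $n$ multiple-orthogonality conditions. Unfolding \eqref{def:ws}, they read $\int\!\int Y_{1,1}(x)x^m y^k e^{-V(x)-W(y)+\tau xy}\,dx\,dy=0$, which by the Kuijlaars--McLaughlin equivalence are the same as the biorthogonality relations \eqref{def:pnqn} characterizing $p_n$; uniqueness of the monic solution then gives $Y_{1,1}=p_n$.

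\emph{The identity \eqref{RH:X22}.} I would run the dual argument on RH problem \ref{RHP:X}. The first $q-1$ columns of \eqref{jumps:X} show that the last-row block $X_{2,1}$ is entire, and \eqref{asymptoticcondition:X} makes its entries polynomials of degree $\le n_k-1$; the last column of \eqref{jumps:X} gives $(X_{2,2})_+-(X_{2,2})_- = X_{2,1}(x)\mathbf w^T(x)$, while \eqref{asymptoticcondition:X} forces $X_{2,2}(z)=z^{-n}(1+O(1/z))$. Thus $X_{2,2}$ is the Cauchy transform of the type I form $A(\xi):=X_{2,1}(\xi)\mathbf w^T(\xi)$, and its decay $\sim z^{-n}$ is equivalent to $\int_{-\infty}^\infty A(\xi)\xi^m\,d\xi=0$ for $0\le m\le n-2$ together with a normalization fixing the coefficient of $z^{-n}$. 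On the other hand, expanding \eqref{def:Qn:til} and using \eqref{def:hk} and the biorthogonality \eqref{def:pnqn} (with $p_0\equiv1$) gives $\int_{-\infty}^\infty Q_{n-1}(\xi)\xi^m\,d\xi=0$ for $m\le n-2$ and $=h^2_{n-1}$ for $m=n-1$, so $\frac{1}{h^2_{n-1}}\Q_{n-1}(z)=z^{-n}(1+O(1/z))$ as well. Hence $X_{2,2}$ and $\frac{1}{h^2_{n-1}}\Q_{n-1}$ share the same decay; to conclude by Liouville that their difference (entire once the jumps cancel, and $O(z^{-n-1})$) vanishes, it remains to match the jumps, i.e. to show that $A$ is a constant multiple of $Q_{n-1}$.

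\emph{Main obstacle.} The genuine work is this last identification of $A(\xi)=\sum_k X_{2,1,k}(\xi)w_k(\xi)$ with $Q_{n-1}$. It cannot follow from the $n$ matching moments alone, since $A$ lives in the $w_k$-basis (with polynomial coefficients of degree $\le n_k-1$) while $Q_{n-1}(\xi)=\int q_{n-1}(y)e^{-V(\xi)-W(y)+\tau\xi y}\,dy$ arises from a single polynomial of degree $n-1>q-2$ in $y$. The bridge is again the Kuijlaars--McLaughlin mechanism: repeated integration by parts in $y$, using $\int\partial_y[y^j e^{-W(y)+\tau\xi y}]\,dy=0$ and $\deg W'=q-1$, reduces every $\int y^k e^{-W+\tau\xi y}\,dy$ with $k\ge q-1$ to a combination $\sum_{k'\le q-2}(\text{poly in }\xi)\int y^{k'}e^{-W+\tau\xi y}\,dy$, so that $Q_{n-1}$ itself takes the form $\sum_k c_k(\xi)w_k(\xi)$. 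The crux is the degree bookkeeping showing $\deg c_k\le n_k-1$, i.e. that $Q_{n-1}$ is a bona fide type I form of the correct multi-index; granting this, uniqueness of the type I form forces $A=c\,Q_{n-1}$, and comparing the $z^{-n}$ coefficients fixes $c$ so that the jumps agree and \eqref{RH:X22} follows. As a consistency check one may instead use the duality \eqref{dual:partitioning}, which after a Schur-complement computation and $\det Y\equiv1$ yields the compact identity $X_{2,2}=\det Y_{2,2}$.
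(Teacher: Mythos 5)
Your proposal is correct, and it stays within the same circle of ideas as the paper: both arguments live entirely in the Kuijlaars--McLaughlin/Van Assche--Geronimo--Kuijlaars RH framework, and your treatment of \eqref{RH:Y11} is precisely the argument of \cite{KMcL}, which the paper simply cites. For \eqref{RH:X22}, though, you run the key identification in the opposite direction from the paper. The paper quotes from \cite{VAGK} the structural facts you rederive (that $X_{2,1}$ is a row of polynomials $A_j$ of degree at most $n_j-1$, the type I moment conditions, and the Cauchy-transform formula for $X_{2,2}$), and then converts the type I form $X_{2,1}(x)\mathbf{w}^T(x)$ \emph{into} a single integral $\int B_{n-1}(y)\,e^{-V(x)-W(y)+\tau xy}\,dy$ with $\deg B_{n-1}\le n-1$; the moment conditions then become the biorthogonality relations \eqref{def:pnqn}, which pin $B_{n-1}=C_{n-1}\,q_{n-1}$ by the known uniqueness of the biorthogonal polynomial, and the asymptotics \eqref{asymptoticcondition:X} fix $C_{n-1}$. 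You instead expand $Q_{n-1}$ \emph{as} a type I combination $\sum_k c_k(\xi)w_k(\xi)$ and appeal to uniqueness of type I forms. Both rest on the same integration-by-parts mechanism, but the paper's direction buys a cheaper degree estimate: for $x^m w_j(x)$ with $m\le n_j-1$ one needs only $j+m(q-1)\le n-1$, immediate from $n_j=\lfloor(n+q-2-j)/(q-1)\rfloor$, whereas your direction requires the finer bookkeeping $\deg c_k\le n_k-1$ for the specific near-diagonal multi-index \emph{and} the one-dimensionality of the space of type I forms with $n-1$ vanishing moments --- a normality statement you assert but do not establish (it does follow, for instance, by running the correspondence back in the paper's direction and invoking uniqueness of $q_{n-1}$, or from unique solvability of RH problem \ref{RHP:X}). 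So the two steps you flag as "granted" are genuine but standard and closable, and your concluding consistency check $X_{2,2}=\det Y_{2,2}$ is valid (it uses $\det Y\equiv 1$, which holds since the jump in \eqref{jumps:Y} is unimodular and $\sum_k n_k=n$) and is indeed a known companion of \eqref{dual:partitioning}.
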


\begin{proof} The statement \eqref{RH:Y11} was established by Kuijlaars-McLaughlin
\cite{KMcL}. 

Next we prove \eqref{RH:X22}. To this end we recall some of the identities of
Van Assche et al. \cite{VAGK}: it is observed in the latter paper that
\begin{equation}\label{VA1}
X_{2,1}(z) = \begin{pmatrix} A_{0}(z) & \ldots & A_{q-2}(z)
\end{pmatrix}
\end{equation}
for certain polynomials $A_j(z)$ of degree at most $n_j-1$, $j=0,\ldots,q-2$.
These polynomials are such that the following \lq type~1\rq\ orthogonality
relations hold: \begin{equation}\label{VA2} \int_{-\infty}^{\infty} x^j
X_{2,1}(x)\mathbf{w}^T(x)\ dx = -2\pi i\delta_{j,n-1},
\end{equation}
for $j=0,1,\ldots,n-1$, where $\delta_{j,n-1}$ denotes the Kronecker delta.
Finally, \cite{VAGK} also shows that
\begin{equation}\label{VA3} X_{2,2}(z) = -\frac{1}{2\pi
i}\int_{-\infty}^{\infty} \frac{1}{z-x} X_{2,1}(x)\mathbf{w}^T(x)\ dx.
\end{equation}

Now we are going to use the special form of the weight functions $w_j(x)$ in
\eqref{def:ws}--\eqref{weightmatrixrankone}. Using \eqref{VA1} and an
integration by parts argument as in \cite{KMcL} it is easy to see that
\begin{eqnarray}\nonumber X_{2,1}(x)\mathbf{w}^T(x) &:=& A_0(x)w_0(x)+\ldots+A_{q-2}(x)w_{q-2}(x) \\
\label{VA4}  &=& \int_{-\infty}^{\infty} B_{n-1}(y) e^{-V(x)-W(y)+\tau xy}\ dy,
\end{eqnarray}
for a certain polynomial $B_{n-1}(y)$ of degree at most $n-1$. Inserting this
in \eqref{VA2}--\eqref{VA3}, these relations are transformed into
\begin{equation}\label{VA5} \int_{-\infty}^{\infty}\!\int_{-\infty}^{\infty}\!
x^j B_{n-1}(y) e^{-V(x)-W(y)+\tau xy}\ dx\ dy = -2\pi i\delta_{j,n-1},
\end{equation}
for $j=0,1,\ldots,n-1$, and \begin{equation}\label{VA6}  X_{2,2}(z) =
-\frac{1}{2\pi i} \int_{-\infty}^{\infty}\!\int_{-\infty}^{\infty}\!
\frac{1}{z-x} B_{n-1}(y) e^{-V(x)-W(y)+\tau xy}  \ dx\ dy,
\end{equation}
respectively.

Now \eqref{VA5} shows that $B_{n-1}$ is precisely the biorthogonal polynomial
$q_{n-1}$ in \eqref{def:pnqn}, up to a constant, i.e.,
$$B_{n-1}(y) = C_{n-1} q_{n-1}(y),$$ for certain $C_{n-1}\in\cee$. Then
\eqref{VA6} implies in turn that \begin{eqnarray} \label{VA7} X_{2,2}(z) &=&
-\frac{C_{n-1}}{2\pi i} \int_{-\infty}^{\infty}\!\int_{-\infty}^{\infty}\!
\frac{1}{z-x} q_{n-1}(y) e^{-V(x)-W(y)+\tau xy}  \ dx\ dy\\ \label{VA8} &=&
-\frac{C_{n-1}}{2\pi i} \Q_{n-1}(z),
\end{eqnarray}
where the second equality follows by virtue of \eqref{def:Qn:til}.

Finally, we want to identify the normalization constant $C_{n-1}$ in
\eqref{VA8}. To this end, we substitute the expansion
$\frac{1}{z-x}=\frac{1}{z}+\frac{x}{z^2}+\frac{x^2}{z^3}+\ldots$ and the
orthogonality relations \eqref{def:pnqn} in \eqref{VA7} to obtain that
$$ X_{2,2}(z) = -\frac{C_{n-1}}{2\pi i} h^2_{n-1}z^{-n}+O(z^{-n-1}),\qquad z\to\infty.
$$
Comparing this with the asymptotics in \eqref{asymptoticcondition:X}, we see
that
$$ -\frac{C_{n-1}}{2\pi i}h^2_{n-1} = 1.$$ Inserting this expression for $C_{n-1}$ in
\eqref{VA8}, we obtain the desired formula \eqref{RH:X22}. This ends the proof
of the lemma.
\end{proof}

\textit{Proof of Theorems \ref{theorem:avpol1} and \ref{theorem:avpol2}.}
Taking into account Lemma~\ref{lemma:RHchar}, Theorems \ref{theorem:avpol1} and
\ref{theorem:avpol2} now follow directly from the results in \cite{BK1} and
\cite{Desrosiers1} respectively; see also \cite{Del,Kui}.$\bol$

\begin{remark}\label{remark:RHkernels} The kernels $K_{1,1}$ in \eqref{def:K11} and $\K_{1,1}$ in
\eqref{def:K11til:bis} can be expressed in terms of the RH problem for $Y(z)$
as follows:
\begin{equation}\label{RH:kernel} K_{1,1}(x_1,x_2) = \frac{1}{2\pi i(x_1-x_2)}
\begin{pmatrix} 0 & \mathbf{w}(x_2) \end{pmatrix}
Y^{-1}(x_2)Y(x_1)
\begin{pmatrix}1 \\ 0\\ \vdots \\ 0 \end{pmatrix},
\end{equation}
and
\begin{equation}\label{RH:kernel:til} \K_{1,1}(x,v) = \frac{1}{x-v}
\begin{pmatrix} 1 & 0 & \ldots & 0 \end{pmatrix}
Y^{-1}(v)Y(x)
\begin{pmatrix}1 \\ 0\\ \vdots \\ 0 \end{pmatrix}.
\end{equation}
These formulas can be obtained from the results in \cite{DK1} and \cite{Del}
respectively. Note that if $x_1,x_2\in\er$ (or $x\in\er$) then we should
replace $Y$ in \eqref{RH:kernel} (or \eqref{RH:kernel:til} respectively) by one
of its boundary values $Y_+$ or $Y_-$; both boundary values lead to the same
formulas. Also note that, by means of \eqref{RH:kernel:til} (or
\eqref{def:K11til:bis}) and the results in \cite{Del}, it is possible to obtain
determinantal formulas for $P_n^{[I,0,K,0]}$ for any $I$ and $K$; however we
will not follow this route here.
\end{remark}

\subsection{Proof of Theorem~\ref{theorem:avpol:num}}
\label{subsection:proof:num}

In this section we prove Theorem~\ref{theorem:avpol:num}. The proof will be
virtually the same as the one of Akemann et al.~\cite{ADOS}; we include it for
convenience of the reader. The proof will require two formulas of Christoffel
type, see \cite{ADOS,BDS,Sz}.

\begin{proposition}\label{prop:Chris1} Define the weight function
\begin{equation}\label{def:wtil1}\til
w(x,y):=\left(\prod_{i=1}^{I}(x-x_i)\right)e^{-V(x)-W(y)+\tau
xy},\end{equation} with $I\geq 0$. The monic biorthogonal polynomial $A_n(x)$
of degree $n$ with respect to this weight function, which is defined by the
biorthogonality relations
\begin{equation}\label{Chris:bior1}
\int_{-\infty}^{\infty}\!\int_{-\infty}^{\infty}\! A_n(x)q(y)\til w(x,y)\ dx\
dy = 0
\end{equation}
for any polynomial $q(y)$ of degree at most $n-1$, is given by
\begin{equation}\label{def:An}
A_n(x):=\frac{1}{\prod_{i=1}^{I}(x-x_i)} \frac{\det\begin{pmatrix}
p_{n}(x_{1}) & \ldots & p_{n+I}(x_{1})\\
\vdots & & \vdots\\
p_{n}(x_{I}) & \ldots &
p_{n+I}(x_{I})\\
p_{n}(x) & \ldots & p_{n+I}(x)
\end{pmatrix}} {\det\begin{pmatrix}
p_{n}(x_{1}) & \ldots & p_{n+I-1}(x_{1})\\
\vdots & & \vdots\\
p_{n}(x_{I}) & \ldots & p_{n+I-1}(x_{I})\end{pmatrix}}.
\end{equation}
\end{proposition}

\begin{proof}
Clearly $A_n(x)$ is a monic polynomial in $x$ of degree $n$. So it suffices to
check the biorthogonality relations \eqref{Chris:bior1}. Observe that in the
integrand $A_n(x)q(y)\til w(x,y)$, the factor $\prod_{i=1}^{I}(x-x_i)$ of $\til
w(x,y)$ in \eqref{def:wtil1} cancels with the prefactor of $A_n(x)$ in
\eqref{def:An}. By the linearity of determinants, the double integral in
\eqref{Chris:bior1} can be performed entrywise inside the last row of the
matrix in the numerator of \eqref{def:An}. Then it suffices to prove that
$$ \int_{-\infty}^{\infty}\!\int_{-\infty}^{\infty}\!
p_{n+i}(x)q(y)w(x,y)\ dx\ dy = 0,\qquad i=0,\ldots,I,
$$
which is a direct consequence of the biorthogonality relations.
\end{proof}

\begin{proposition}\label{prop:Chris2} Define the weight function
\begin{equation}\label{def:wtil2}\til
w(x,y):=\left(\prod_{i=1}^{I}(x-x_i)\prod_{j=1}^{J}(y-y_j)\right)e^{-V(x)-W(y)+\tau
xy},\end{equation} where $I>J\geq 0$. The monic biorthogonal polynomial
$B_n(y)$ of degree $n$ with respect to this weight function, which is defined
by the biorthogonality relations \begin{equation}\label{Chris:bior3}
\int_{-\infty}^{\infty}\!\int_{-\infty}^{\infty}\! p(x)B_n(y)\til w(x,y)\ dx\
dy = 0
\end{equation} for any polynomial $p$ of degree at most $n-1$, is given by
\begin{multline}\label{def:Bn}
B_n(y):=\frac{h_{n+J}^2}{\prod_{j=1}^{J}(y-y_j)}\\ \times
\frac{\det\begin{pmatrix}
\K_{1,2}(x_1,y_1) & \ldots & \K_{1,2}(x_1,y_J) & \K_{1,2}(x_1,y) & p_{n+J+1}(x_{1}) & \ldots & p_{n+I-1}(x_{1})\\
\vdots & & \vdots & \vdots & & \vdots\\
\K_{1,2}(x_I,y_1) & \ldots & \K_{1,2}(x_I,y_J)& \K_{1,2}(x_I,y) &
p_{n+J+1}(x_{I}) & \ldots &
p_{n+I-1}(x_{I})\\
\end{pmatrix}} {\det\begin{pmatrix}
\K_{1,2}(x_1,y_1) & \ldots & \K_{1,2}(x_1,y_J) & p_{n+J}(x_{1}) & p_{n+J}(x_{1}) & \ldots & p_{n+I-1}(x_{1})\\
\vdots & & \vdots & \vdots & & \vdots\\
\K_{1,2}(x_I,y_1) & \ldots & \K_{1,2}(x_I,y_J)& p_{n+J}(x_{1}) & p_{n+J}(x_{I})
& \ldots & p_{n+I-1}(x_{I})\end{pmatrix}}.
\end{multline}
Here we define $\K_{1,2}(x,y)$ as in \eqref{def:K12sum} but with $n$ replaced
by $n+p$ for an arbitrary but fixed number $p\in\{J+1,\ldots,I\}$. 
\end{proposition}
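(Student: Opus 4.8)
The plan is to follow the same two-step strategy as in the proof of Proposition~\ref{prop:Chris1}: first check that the right-hand side of \eqref{def:Bn} is a monic polynomial in $y$ of degree $n$, and then verify the biorthogonality relations \eqref{Chris:bior3}. Throughout I fix the index $p=I$ in the definition \eqref{def:K12sum} of $\K_{1,2}$, so that $\K_{1,2}(x_i,y)=\sum_{k=0}^{n+I-1}h_k^{-2}p_k(x_i)q_k(y)$; the independence of the statement on $p\in\{J+1,\ldots,I\}$ will be addressed at the end. Write $N(y)$ for the numerator determinant in \eqref{def:Bn} and $D$ for the (nonzero, $y$-independent) denominator determinant.

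First I would settle the polynomial structure. The only $y$-dependence in $N(y)$ sits in the single column with entries $\K_{1,2}(x_i,y)$, so $N(y)$ is a polynomial in $y$ of degree at most $n+I-1$. Setting $y=y_j$ makes this column coincide with the $j$th column, so $N(y_j)=0$ for $j=1,\ldots,J$ and $\prod_{j=1}^J(y-y_j)$ divides $N(y)$; hence $B_n(y)$ is genuinely a polynomial. To pin down the degree and leading coefficient I would apply column operations: subtracting the $y$-dependent multiples $h_k^{-2}q_k(y)$ of the explicit columns $(p_{k}(x_i))_i$, $k=n+J+1,\ldots,n+I-1$, from the $\K_{1,2}(x_i,y)$-column removes all terms with $k\geq n+J+1$, reducing that column to $\sum_{k=0}^{n+J}h_k^{-2}p_k(x_i)q_k(y)$, which has degree $n+J$ in $y$ with leading coefficient $h_{n+J}^{-2}(p_{n+J}(x_i))_i$. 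Since these operations leave the determinant unchanged and all remaining columns are $y$-independent, the coefficient of $y^{n+J}$ in $N(y)$ equals $h_{n+J}^{-2}D$. Therefore $h_{n+J}^2 N(y)/D$ is monic of degree $n+J$, and dividing by $\prod_{j=1}^J(y-y_j)$ shows $B_n(y)$ is monic of degree $n$, as required.

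Next I would verify the biorthogonality. Since $B_n(y)\prod_{j=1}^J(y-y_j)=h_{n+J}^2 N(y)/D$ by \eqref{def:Bn}, the factor $\prod_{j}(y-y_j)$ in the weight \eqref{def:wtil2} cancels, and \eqref{Chris:bior3} reduces to showing
\[
\int_{-\infty}^{\infty}\!\int_{-\infty}^{\infty}\! p(x)\Big(\prod_{i=1}^I(x-x_i)\Big)N(y)\,e^{-V(x)-W(y)+\tau xy}\ dx\ dy = 0
\]
for every polynomial $p$ with $\deg p\leq n-1$. Expanding $N(y)$ along its $\K_{1,2}(x_i,y)$-column gives $N(y)=\sum_{i=1}^I(-1)^{i+J+1}M_i\,\K_{1,2}(x_i,y)$, with cofactors $M_i$ independent of $y$. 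For each $i$, writing $r(x):=p(x)\prod_{i'=1}^I(x-x_{i'})$, which has degree at most $n+I-1$, and combining the expansion of $\K_{1,2}$ with the biorthogonality \eqref{def:pnqn}--\eqref{def:hk}, one obtains the reproducing identity
\[
\int_{-\infty}^{\infty}\!\int_{-\infty}^{\infty}\! r(x)\,\K_{1,2}(x_i,y)\,e^{-V(x)-W(y)+\tau xy}\ dx\ dy = r(x_i).
\]
But $r(x_i)=p(x_i)\prod_{i'}(x_i-x_{i'})=0$, since the product contains the vanishing factor $(x_i-x_i)$. Hence each term of the cofactor expansion integrates to zero, and \eqref{Chris:bior3} follows.

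The slightly delicate point is the degree and leading-coefficient bookkeeping in the first step, where one must match the reduced column precisely against the denominator determinant $D$; the biorthogonality is then immediate once one notices the collapse of each double integral to $r(x_i)$. Finally, the independence of $N(y)$ and of $D$ on the choice of $p\in\{J+1,\ldots,I\}$ follows from the same column-operation idea: passing from $p$ to $p+1$ augments each $\K_{1,2}$ entry by a multiple of a column $(p_{n+p}(x_i))_i$ that already occurs explicitly in the matrix, so the determinants are unchanged. $\bol$
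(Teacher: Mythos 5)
Your proof is correct and follows essentially the same route as the paper's: cancel the factor $\prod_{j}(y-y_j)$ against the prefactor, push the double integral entrywise into the $\K_{1,2}(x_i,y)$ column, and kill each resulting term through a reproducing property evaluated at $x=x_i$, where the factor $(x_i-x_i)$ vanishes. The only (harmless) differences are that you invoke the $\K_{1,2}$-reproducing identity \eqref{K12:repr:a} directly with the index raised to $n+I$, whereas the paper first integrates out $y$ to reduce to $K_{1,1}$ and \eqref{K11:repr} (an equivalent restatement, as Lemma~\ref{lemma:K12repr} itself notes), and that you carry out in full the column-operation bookkeeping behind the monic degree-$n$ claim and the independence of the choice of $p$, which the paper dismisses with \lq\lq Clearly\rq\rq\ and a remark.
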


\begin{proof} Clearly $B_n(y)$ is a monic polynomial in $y$ of degree $n$. Next
we check the biorthogonality relations \eqref{Chris:bior3}. To this end we will
take the index $p$ in the statement of the lemma equal to $I$. Observe that in
the integrand $p(x)B_n(y)\til w(x,y)$, the factor $\prod_{j=1}^{J}(y-y_j)$ of
$\til w(x,y)$ in \eqref{def:wtil2} cancels with the prefactor of $B_n(y)$ in
\eqref{def:Bn}. By linearity we can then take the double integral in
\eqref{Chris:bior3} entrywise inside the $(J+1)$th column of the matrix in the
numerator of \eqref{def:Bn}. Hence the integral \eqref{Chris:bior3} can be
written as a linear combination of terms of the form
\begin{eqnarray}\nonumber & &
\int_{-\infty}^{\infty}\!\int_{-\infty}^{\infty}\!
p(x)\prod_{i=1}^{I}(x-x_i)\K_{1,2}(x_k,y) e^{-V(x)-W(y)+\tau xy}\ dx\ dy
 \\ \label{Chris:bior4} &=& \int_{-\infty}^{\infty}\!
p(x)\prod_{i=1}^{I}(x-x_i)K_{1,1}(x_k,x)\ dx,
\end{eqnarray}
for $k=1,\ldots,I$, where we used \eqref{def:K11} and \eqref{def:K12sum}. But
the integral \eqref{Chris:bior4} is zero for any $k=1,\ldots,I$, because of the
reproducing property \eqref{K11:repr} for $K_{1,1}$. (Recall that we are taking
the index $p$ in the statement of the lemma equal to $I$.)
\end{proof}

Theorem~\ref{theorem:avpol:num} now follows from Theorem~\ref{theorem:avpol1}
and Propositions~\ref{prop:Chris1} and \ref{prop:Chris2} by an easy induction
argument, see e.g.\ \cite{BDS}.$\bol$

\subsection{Properties of the kernels $\K_{i,j}$}
\label{subsection:reproducing}

Our next goal is to prove Theorem~\ref{theorem:avpol:gen}. In the present
section we first collect some preliminary results on the kernels $\K_{i,j}$
which will be needed in the proof.

First we discuss the kernel $\K_{1,2}(x,y) = K_{1,2}(x,y)$. Observe that by the
definition \eqref{def:K12sum}, the kernel $\K_{1,2}(x,y)$ is a bivariate
polynomial in $x$ and $y$, of the form
\begin{equation}\label{K12:coeff} \K_{1,2}(x,y) = \sum_{i,j=0}^{n-1}
c_{i,j}x^iy^j,
\end{equation}
for suitable coefficients $c_{i,j}\in\cee$.

\begin{lemma}\label{lemma:K12repr} (Reproducing property of $\K_{1,2}$:)
The kernel $\K_{1,2}$ is reproducing in the sense that
\begin{equation}\label{K12:repr:a}
\int_{-\infty}^{\infty}\!\int_{-\infty}^{\infty}\!
\K_{1,2}(x,y)p(\xi)e^{-V(\xi)-W(y)+\tau \xi y}\ d\xi\ dy = p(x),
\end{equation}
\begin{equation}\label{K12:repr:b}
\int_{-\infty}^{\infty}\!\int_{-\infty}^{\infty}\!
\K_{1,2}(x,y)q(\eta)e^{-V(x)-W(\eta)+\tau x\eta}\ dx\ d\eta = q(y),
\end{equation}
for any polynomials $p$ and $q$ of degree at most $n-1$. Moreover, the
reproducing property \eqref{K12:repr:a} (or \eqref{K12:repr:b}) uniquely
characterizes $\K_{1,2}(x,y)$ over all bivariate polynomials of the form
\eqref{K12:coeff}.

\end{lemma}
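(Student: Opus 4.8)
The plan is to read both assertions directly off the explicit sum formula \eqref{def:K12sum}, namely $\K_{1,2}(x,y)=\sum_{i=0}^{n-1}h_i^{-2}p_i(x)q_i(y)$, combined with the biorthogonality relations \eqref{def:pnqn} and the normalization \eqref{def:hk}.

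First I would establish the reproducing property \eqref{K12:repr:a}. Since both sides are linear in $p$, it suffices to verify the identity when $p$ runs over a basis of the polynomials of degree at most $n-1$, and the convenient choice is the biorthogonal family $\{p_j\}_{j=0}^{n-1}$ itself. Substituting \eqref{def:K12sum} and integrating term by term yields
\begin{equation*}
\int_{-\infty}^{\infty}\!\int_{-\infty}^{\infty}\!\K_{1,2}(x,y)\,p_j(\xi)\,e^{-V(\xi)-W(y)+\tau\xi y}\ d\xi\ dy=\sum_{i=0}^{n-1}\frac{p_i(x)}{h_i^2}\int_{-\infty}^{\infty}\!\int_{-\infty}^{\infty}\!q_i(y)\,p_j(\xi)\,e^{-V(\xi)-W(y)+\tau\xi y}\ d\xi\ dy.
\end{equation*}
By \eqref{def:pnqn} and \eqref{def:hk} each inner double integral equals $h_i^2\delta_{i,j}$, so the right-hand side collapses to $p_j(x)$. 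The companion identity \eqref{K12:repr:b} follows from the same computation after interchanging the roles of the two integration variables and of the two families $\{p_i\}$ and $\{q_j\}$.

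For the uniqueness statement I would proceed as follows. Let $G(x,y)$ be any bivariate polynomial of the form \eqref{K12:coeff} satisfying \eqref{K12:repr:a}. Because $G$ has degree at most $n-1$ in $y$ and $\{q_j\}_{j=0}^{n-1}$ is a basis for the polynomials of degree at most $n-1$, I may expand $G(x,y)=\sum_{j=0}^{n-1}g_j(x)\,q_j(y)$ with uniquely determined polynomials $g_j(x)$ of degree at most $n-1$. Testing \eqref{K12:repr:a} against $p=p_i$ and invoking biorthogonality exactly as above, all terms with $j\neq i$ vanish and one is left with $h_i^2\,g_i(x)=p_i(x)$, so $g_i(x)=h_i^{-2}p_i(x)$ for every $i$. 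Hence $G(x,y)=\sum_{i=0}^{n-1}h_i^{-2}p_i(x)q_i(y)=\K_{1,2}(x,y)$. The argument starting from \eqref{K12:repr:b} is entirely symmetric: one expands $G$ in the basis $\{p_i\}$ in the $x$-variable and tests against $q=q_j$.

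There is no genuine difficulty in this lemma; the only point that repays a little care is the choice of test functions in the uniqueness part. Testing the reproducing identity against the monomials $x^k$ would produce a linear system for the coefficients of $G$ whose matrix is assembled from moments of the weight $e^{-V(\xi)-W(y)+\tau\xi y}$ and is cumbersome to invert, whereas testing against the biorthogonal polynomials $p_i$ diagonalizes the system and renders both existence and uniqueness immediate. All the integrals converge and term-by-term integration is legitimate because $V$ and $W$ are polynomials of even degree with positive leading coefficients.
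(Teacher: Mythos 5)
Your proof is correct and essentially the same as the paper's: the paper obtains \eqref{K12:repr:a}--\eqref{K12:repr:b} by noting that, once the $y$- (resp.\ $x$-) integration is carried out, they restate the reproducing properties \eqref{K11:repr} and \eqref{K22:repr} of $K_{1,1}$ and $K_{2,2}$, and those were themselves proved by exactly your computation --- testing on the biorthogonal basis and invoking \eqref{def:pnqn} and \eqref{def:hk} --- so your direct substitution of \eqref{def:K12sum} merely inlines that step. Your uniqueness argument (expanding $G(x,y)=\sum_j g_j(x)q_j(y)$ and testing against $p_i$ to force $g_i=h_i^{-2}p_i$) is a correct and welcome filling-in of a detail the paper dismisses as \lq\lq easily shown\rq\rq\ with a pointer to \cite{Del}.
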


\begin{proof} Taking into account \eqref{def:K11}, \eqref{def:K22} and \eqref{def:K12sum},
the reproducing properties \eqref{K12:repr:a}--\eqref{K12:repr:b} are just a
restatement of \eqref{K11:repr} and \eqref{K22:repr} respectively. The claim
about uniqueness is also easily shown, see also \cite{Del}.
\end{proof}

Next we discuss the kernels $\K_{1,1}$ and $\K_{2,2}$. We will need the
following result.

\begin{lemma}\label{lemma:vanishing} (Vanishing properties of $\K_{1,1}$ and $\K_{2,2}$:)
The kernel $\K_{1,1}$ satisfies the \lq vanishing property\rq\
\begin{equation}\label{vanishing:K11}
\int_{-\infty}^{\infty}\!\int_{-\infty}^{\infty}\!
\K_{1,1}(\xi,v)q(\eta)e^{-V(\xi)-W(\eta)+\tau \xi\eta} \ d\xi\ d\eta = 0
\end{equation}
for any $v\in\cee\setminus\er$ and all polynomials $q$ of degree at most $n-1$.
Similarly, the kernel $\K_{2,2}$ satisfies the vanishing property
\begin{equation}\label{vanishing:K22}
\int_{-\infty}^{\infty}\!\int_{-\infty}^{\infty}\!
\K_{2,2}(w,\eta)p(\xi)e^{-V(\xi)-W(\eta)+\tau \xi \eta} \ d\xi\ d\eta = 0,
\end{equation}
for any $w\in\cee\setminus\er$ and all polynomials $p$ of degree at most $n-1$.
\end{lemma}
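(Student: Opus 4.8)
The plan is to prove the vanishing property \eqref{vanishing:K11} directly from the summation formula \eqref{def:K11sum}. First I would substitute
$$\K_{1,1}(\xi,v) = \sum_{i=0}^{n-1}\frac{1}{h_i^2}p_i(\xi)\Q_i(v) - \frac{1}{v-\xi}$$
into the double integral in \eqref{vanishing:K11}. The sum part splits into two groups of factors: the factors depending on $\xi$ (namely $p_i(\xi)$ together with $q(\eta)$ and the exponential) are exactly what appears in the biorthogonality relation, while $\Q_i(v)$ carries no integration variable and factors out. Since $q$ has degree at most $n-1$, I would expand $q$ in the biorthogonal basis $q_0,\ldots,q_{n-1}$ and use \eqref{def:pnqn}--\eqref{def:hk} to evaluate each term $\int\!\int p_i(\xi)q(\eta)e^{-V(\xi)-W(\eta)+\tau\xi\eta}\,d\xi\,d\eta$. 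By biorthogonality this selects the coefficient of $q_i$ in $q$, times $h_i^2$.

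The crucial observation is that this contribution from the summation term must exactly cancel the contribution of the remaining Cauchy kernel $-\frac{1}{v-\xi}$. Writing out the latter,
$$-\int_{-\infty}^{\infty}\!\int_{-\infty}^{\infty}\!\frac{q(\eta)}{v-\xi}e^{-V(\xi)-W(\eta)+\tau\xi\eta}\,d\xi\,d\eta,$$
I would recognize the inner $\xi$-integral as producing precisely a Cauchy transform of the type appearing in $\Q$. Indeed, comparing with the double-integral definition of $\Q_j$ in \eqref{def:Qn:til}, the term $-\frac{1}{v-\xi}$ integrated against $q(\eta)e^{-V(\xi)-W(\eta)+\tau\xi\eta}$ yields $-\Q$ applied to the appropriate polynomial expansion of $q$. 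Expanding $q = \sum_{i=0}^{n-1}\gamma_i q_i$, each piece contributes $-\gamma_i\,\Q_i(v)$, whereas the summation term contributes $\sum_i \frac{1}{h_i^2}\Q_i(v)\cdot\gamma_i h_i^2 = \sum_i \gamma_i\,\Q_i(v)$. The two expressions are identical with opposite signs, so they cancel and the total integral vanishes.

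The main technical point, and where I would be most careful, is the interchange of the $(\xi,\eta)$-integration with the sum and with the parameter $v$: one must justify that the Cauchy kernel $\frac{1}{v-\xi}$ (with $v\in\cee\setminus\er$, so the singularity is off the real axis) together with the Gaussian-type decay from $e^{-V(\xi)-W(\eta)}$ makes the iterated integrals absolutely convergent and the interchange legitimate. Once this Fubini-type step is granted, everything reduces to the single algebraic identity $\sum_i \gamma_i\Q_i(v) - \sum_i\gamma_i\Q_i(v) = 0$. The proof of \eqref{vanishing:K22} is entirely symmetric: I would start from \eqref{def:K22sum}, expand the polynomial $p$ of degree at most $n-1$ in the basis $p_0,\ldots,p_{n-1}$, and observe that the summation term $\sum_i\frac{1}{h_i^2}\P_i(w)q_i(\eta)$ integrated against $p(\xi)$ cancels against the Cauchy term $-\frac{1}{w-y}$ by the analogous identity for the transforms $\P_i$ in \eqref{def:Pn:til}. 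The roles of $\xi\leftrightarrow\eta$, $p\leftrightarrow q$, and $\P\leftrightarrow\Q$ are simply exchanged throughout.
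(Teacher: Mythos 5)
Your proof is correct, but it takes a different route than the paper. The paper substitutes the intermediate integral representation \eqref{def:K11til:bis}, so that the vanishing integral splits into the double Cauchy-transform term and a \emph{triple} integral still containing the unexpanded kernel $K_{1,1}(\xi,x)$; it then invokes the reproducing-type identity \eqref{K11:repr:bis} to collapse the triple integral and exhibit the cancellation. You instead start from the fully expanded summation formula \eqref{def:K11sum} of Lemma~\ref{lemma:sumformulas}, expand $q=\sum_{i=0}^{n-1}\gamma_i q_i$ in the biorthogonal basis, kill the rank-$n$ part down to $\sum_i\gamma_i\Q_i(v)$ via \eqref{def:pnqn}--\eqref{def:hk}, and recognize the Cauchy term as $-\sum_i\gamma_i\Q_i(v)$ directly from the double-integral definition \eqref{def:Qn:til}; the symmetric argument with $\P_i$ and \eqref{def:Pn:til} handles \eqref{vanishing:K22}. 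The two arguments encode the same cancellation, but yours is more elementary once Lemma~\ref{lemma:sumformulas} is in hand: it avoids the triple integral and the auxiliary identity \eqref{K11:repr:bis} (which the paper must justify by the same basis-expansion device you use), at the cost of leaning on the summation formula, whose own proof in Section~\ref{subsection:proof:lemma:sum} passes through precisely the representation \eqref{def:K11til:bis} that the paper uses here. Your concern about Fubini is easily dispatched and consistent with the paper's standing conventions: since $v\in\cee\setminus\er$, the factor $1/(v-\xi)$ is bounded on $\er$, and the assumption that $V,W$ are even-degree polynomials with positive leading coefficients makes all iterated integrals absolutely convergent, so the interchange of the finite sum and the integrals is legitimate.
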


\begin{proof} This lemma
can be obtained from \eqref{RH:kernel:til} and the vanishing property in
\cite{Del}. For completeness, we include a direct proof. Using
\eqref{def:K11til:bis} we have
\begin{multline}\label{proof:vanishing}
\int_{-\infty}^{\infty}\!\int_{-\infty}^{\infty}\!
\K_{1,1}(\xi,v)q(\eta)e^{-V(\xi)-W(\eta)+\tau \xi\eta} \ d\xi\ d\eta =
-\int_{-\infty}^{\infty}\!\int_{-\infty}^{\infty}\!
\frac{q(\eta)e^{-V(\xi)-W(\eta)+\tau \xi\eta}}{v-\xi} \ d\xi\ d\eta \\ +
\int_{-\infty}^{\infty}\!\int_{-\infty}^{\infty}\!\int_{-\infty}^{\infty}\!
\frac{K_{1,1}(\xi,x)q(\eta)e^{-V(\xi)-W(\eta)+\tau \xi\eta}}{v-x} \ d\xi\
d\eta\ dx.
\end{multline}
Now we use the following formula of reproducing type:
\begin{equation}\label{K11:repr:bis}
\int_{-\infty}^{\infty}\!\int_{-\infty}^{\infty}\!
K_{1,1}(\xi,x)q(\eta)e^{-V(\xi)-W(\eta)+\tau \xi\eta} \ d\xi\ d\eta =
\int_{-\infty}^{\infty}\! q(\eta)e^{-V(x)-W(\eta)+\tau x\eta}\ d\eta,
\end{equation}
for any polynomial $q$ of degree at most $n-1$. (As usual, by linearity it
suffices to check \eqref{K11:repr:bis} when $q$ equals one of the biorthogonal
polynomials $q_i$ for $i=0,\ldots,n-1$, in which case it follows easily by the
biorthogonality relations.) Using \eqref{K11:repr:bis} in the term with the
triple integral in \eqref{proof:vanishing}, we see that both terms in the right
hand side of \eqref{proof:vanishing} cancel each other and so
\eqref{vanishing:K11} follows. The proof of \eqref{vanishing:K22} is similar.
\end{proof}

\begin{lemma}\label{lemma:leadingterms} (Asymptotic behavior of  biorthogonal polynomials and kernels:)
We have
\begin{eqnarray}
\label{p:asy} p_n(x) &=& x^{n}+O(x^{n-1}),\qquad x\to\infty,\\
\label{q:asy} q_n(x) &=& y^{n}+O(y^{n-1}),\qquad y\to\infty,\\
\label{Ptil:asy} \P_n(w) &=& h_n^2 w^{-n-1}+O(w^{-n-2}),\qquad w\to\infty, \\
\label{Qtil:asy} \Q_n(v) &=& h_n^2 v^{-n-1}+O(v^{-n-2}),\qquad v\to\infty,
\end{eqnarray}
and \begin{eqnarray}\label{K11til:asy1}\K_{1,1}(x,v) &=&
h_{n-1}^{-2}\Q_{n-1}(v)x^{n-1}+O(x^{n-2}), \qquad x\to\infty,\\
\label{K11til:asy2}\K_{1,1}(x,v) &=& -p_n(x)v^{-n-1}+O\left(v^{-n-2}\right),
\qquad v\to\infty,
\\
\label{K12til:asy1}\K_{1,2}(x,y) &=&
h_{n-1}^{-2}q_{n-1}(y)x^{n-1}+O(x^{n-2}),\qquad x\to\infty,\\
\label{K12til:asy2}\K_{1,2}(x,y) &=&
h_{n-1}^{-2}p_{n-1}(x)y^{n-1}+O(y^{n-2}),\qquad y\to\infty,
\end{eqnarray}
and
\begin{eqnarray}
\label{K21til:asy1}\K_{2,1}(w,v) &=&
-\Q_n(v)w^{-n-1}+O\left(w^{-n-2}\right),\qquad w\to\infty,\\
\label{K21til:asy2}\K_{2,1}(w,v) &=&
-\P_n(w)v^{-n-1}+O\left(v^{-n-2}\right),\qquad v\to\infty,\\
\label{K22til:asy1}\K_{2,2}(w,y) &=& -q_n(x)w^{-n-1}+O\left(w^{-n-2}\right),
\qquad w\to\infty,\\
\label{K22til:asy2}\K_{2,2}(w,y)
&=&h_{n-1}^{-2}\P_{n-1}(w)y^{n-1}+O(y^{n-2}),\qquad y\to\infty.
\end{eqnarray}
\end{lemma}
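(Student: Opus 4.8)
The plan is to treat the twelve statements in four groups of increasing difficulty, the unifying tool being the geometric expansion of the Cauchy kernels $1/(w-\eta)$ and $1/(v-\xi)$ in the large variable, combined with the biorthogonality relations \eqref{def:pnqn}. First, \eqref{p:asy}--\eqref{q:asy} are immediate, since $p_n$ and $q_n$ are monic of degree $n$. For \eqref{Ptil:asy}--\eqref{Qtil:asy} I would insert $\tfrac{1}{w-\eta}=\sum_{k\geq 0}\eta^k w^{-k-1}$ into the double-integral form of $\P_n$ in \eqref{def:Pn:til}, giving
\[
\P_n(w)=\sum_{k\geq 0}w^{-k-1}\int_{-\infty}^{\infty}\!\int_{-\infty}^{\infty}\! p_n(\xi)\,\eta^k\,e^{-V(\xi)-W(\eta)+\tau\xi\eta}\ d\xi\ d\eta .
\]
Expanding $\eta^k$ in the basis $\{q_j\}$ and using \eqref{def:pnqn}, the moment integral vanishes for $k=0,\ldots,n-1$ and equals $h_n^2$ for $k=n$ (the coefficient of $q_n$ in $\eta^n$ being $1$), which yields \eqref{Ptil:asy}; \eqref{Qtil:asy} is identical with the roles of $\xi,\eta$ interchanged.

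The \emph{polynomial} directions of the kernel estimates can then be read off directly from the summation formulas of Lemma~\ref{lemma:sumformulas}. In \eqref{def:K12sum} the $i=n-1$ term dominates as $x\to\infty$ (resp. $y\to\infty$), which gives \eqref{K12til:asy1}--\eqref{K12til:asy2}. In \eqref{def:K11sum} and \eqref{def:K22sum} the same $i=n-1$ term dominates as $x\to\infty$ (resp. $y\to\infty$), while the subtracted Cauchy term $1/(v-x)$ (resp. $1/(w-y)$) is only $O(1/x)$ (resp. $O(1/y)$) and hence negligible; this gives \eqref{K11til:asy1} and \eqref{K22til:asy2}.

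The delicate estimates are \eqref{K11til:asy2}, \eqref{K21til:asy1}--\eqref{K21til:asy2} and \eqref{K22til:asy1}, where the claimed decay $O(v^{-n-1})$ (resp. $O(w^{-n-1})$) is \emph{faster} than that of the individual summands, so a cancellation must be exhibited. For \eqref{K11til:asy2} I would expand each $\Q_i(v)=\sum_{k\ge i}v^{-k-1}m_{k,i}$ with $m_{k,i}=\int\!\int \xi^k q_i(\eta)e^{-V(\xi)-W(\eta)+\tau\xi\eta}\,d\xi\,d\eta$, and regroup $\sum_{i=0}^{n-1}h_i^{-2}p_i(x)\Q_i(v)$ by powers of $v$. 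The key algebraic identity is that $m_{k,i}/h_i^2$ equals the coefficient of $p_i$ in the expansion $\xi^k=\sum_{i\le k}c_i^{(k)}p_i(\xi)$; consequently the coefficient of $v^{-k-1}$ in the sum is $\sum_{i=0}^{\min(k,n-1)}c_i^{(k)}p_i(x)$, which equals $x^k$ for $k\le n-1$ and $x^n-p_n(x)$ for $k=n$. Since $-1/(v-x)=-\sum_{k\ge 0}x^k v^{-k-1}$, the contributions through order $v^{-n}$ cancel exactly, and the coefficient of $v^{-n-1}$ in $\K_{1,1}(x,v)$ is $(x^n-p_n(x))-x^n=-p_n(x)$, which is \eqref{K11til:asy2}. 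Equation \eqref{K22til:asy1} is the mirror image, using \eqref{def:K22sum}, the expansion of $\P_i(w)$, and the basis $\{q_i\}$, producing $-q_n(y)$ at order $w^{-n-1}$.

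Finally, for \eqref{K21til:asy1} I would apply the same mechanism to \eqref{def:K21sum}: expanding $\P_i(w)=\sum_{k\ge i}w^{-k-1}m_{k,i}$ (with $\xi,\eta$ swapped) and using the change-of-basis identity shows that the coefficient of $w^{-k-1}$ in $\sum_i h_i^{-2}\P_i(w)\Q_i(v)$ equals $\int\!\int \eta^k (v-\xi)^{-1}e^{-V(\xi)-W(\eta)+\tau\xi\eta}\,d\xi\,d\eta$ for $k\le n-1$. This is precisely the coefficient of $w^{-k-1}$ obtained by expanding the subtracted double integral in \eqref{def:K21sum}, so all terms through order $w^{-n}$ cancel; the surviving coefficient at order $w^{-n-1}$ is $-\Q_n(v)$, giving \eqref{K21til:asy1}, and \eqref{K21til:asy2} follows by the symmetry $w\leftrightarrow v$, $\xi\leftrightarrow\eta$. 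The main obstacle throughout lies in these three cancellation arguments—specifically, verifying the change-of-basis identity and checking that the subtracted Cauchy/double-integral terms reproduce exactly the low-order moments of the biorthogonal sum; once this is in place, the surviving top-order coefficient is forced to be $-p_n$, $-q_n$, $-\Q_n$ or $-\P_n$ as claimed.
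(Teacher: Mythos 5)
Your proposal is correct, and for the easy items (\eqref{p:asy}--\eqref{Qtil:asy}, \eqref{K11til:asy1}, \eqref{K12til:asy1}--\eqref{K12til:asy2}, \eqref{K22til:asy2}) it coincides with the paper's proof: geometric expansion of the Cauchy kernel plus biorthogonality, and domination of the $i=n-1$ term in the summation formulas. For the four delicate estimates \eqref{K11til:asy2}, \eqref{K21til:asy1}--\eqref{K21til:asy2}, \eqref{K22til:asy1}, however, you organize the cancellation differently from the paper. The paper works from the \emph{integral} definitions \eqref{def:K11til} and \eqref{def:K21til}: it expands $1/(v-\xi)$ (resp.\ $1/(w-\eta)$) inside the integral and then evaluates $\int (x-\xi)\xi^{i-1}K_{1,1}(x,\xi)\,d\xi$ and $\int \eta^{i-1}K_{2,1}(\eta,\xi)\,d\eta$ using the reproducing properties \eqref{K11:repr} and \eqref{K22:reprbis}; the order-$n$ coefficient is extracted by the rank-one update trick of recognizing $K_{1,1}(x,\xi)+\frac{1}{h_n^2}p_n(x)Q_n(\xi)$ as the kernel with $n$ replaced by $n+1$, which kills the integral and leaves exactly $p_n(x)$ (resp.\ $-Q_n(\xi)$ at $i=n+1$). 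You instead work from the \emph{summation} formulas \eqref{def:K11sum} and \eqref{def:K21sum}, expand each one-sided Cauchy transform $\Q_i(v)=\sum_k m_{k,i}v^{-k-1}$, and use the change-of-basis identity $m_{k,i}=c_i^{(k)}h_i^2$ (with $\xi^k=\sum_i c_i^{(k)}p_i(\xi)$) to match the low-order coefficients against the expansion of the subtracted term $-1/(v-x)$, resp.\ the subtracted double integral; your verification that the coefficient of $v^{-n-1}$ is $\bigl(x^n-p_n(x)\bigr)-x^n=-p_n(x)$ is exactly right, and the same bookkeeping gives $-\Q_n(v)$, $-\P_n(w)$, $-q_n(y)$ in the other three cases. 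The two routes are dual incarnations of the same biorthogonality-induced cancellation: the paper's reproducing-property argument dispenses with explicit change-of-basis coefficients and handles all orders $i\leq n-1$ in one stroke, while yours makes the order-by-order cancellation completely explicit and stays entirely on the summation-formula side of Lemma~\ref{lemma:sumformulas}; the only point you should make explicit is the harmless interchange of the infinite $k$-sum with the finite $i$-sum, which is justified because these are asymptotic expansions of Cauchy transforms of rapidly decaying functions with $v,w\in\cee\setminus\er$, exactly as in the paper's own formal expansions.
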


\begin{proof} Equations \eqref{p:asy}--\eqref{q:asy} are obvious by definition.
Let us check \eqref{Ptil:asy}. By substituting the series $\frac{1}{w-\eta} =
\sum_{i=1}^{\infty}\frac{\eta^{i-1}}{w^i}$ in \eqref{def:Pn:til} we find that
$$ \P_n(w) = \sum_{i=1}^{\infty}\frac{1}{w^i}\left(
\int_{-\infty}^{\infty}\!\int_{-\infty}^{\infty}\!
p_n(\xi)\eta^{i-1}e^{-V(\xi)-W(\eta)+\tau\xi\eta}\ d\xi\ d\eta\right).
$$
Now by the biorthogonality relations we have
$$ \int_{-\infty}^{\infty}\!\int_{-\infty}^{\infty}\!
p_n(\xi)\eta^{i-1} e^{-V(\xi)-W(\eta)+\tau\xi\eta}\ d\xi\ d\eta =
h_n^2\delta_{i,n+1},
$$
for any $i=1,\ldots,n+1$. So we obtain \eqref{Ptil:asy}. In a similar way one
checks \eqref{Qtil:asy}.

Equation \eqref{K11til:asy1} follows by virtue of \eqref{def:K11sum} and
\eqref{p:asy}. Next we check \eqref{K11til:asy2}. By substituting the expansion
$\frac{1}{v-\xi} = \sum_{i=1}^{\infty}\frac{\xi^{i-1}}{v^i}$ in
\eqref{def:K11til} we obtain
\begin{equation}\label{proofdom0}\K_{1,1}(x,v)=\frac{1}{x-v}
\sum_{i=1}^{\infty}\frac{1}{v^i}\left(\int_{-\infty}^{\infty}
(x-\xi)\xi^{i-1}K_{1,1}(x,\xi)\ d\xi\right).\end{equation} Now
$$ \int_{-\infty}^{\infty}\! (x-\xi)\xi^{i-1}K_{1,1}(x,\xi)\ d\xi = \left[ (x-\xi)\xi^{i-1}
\right]_{\xi=x} = 0,\qquad \textrm{if } i=1,\ldots,n-1,
$$
because of the reproducing property \eqref{K11:repr}. For $i=n$ we have
\begin{multline}\label{K11:ntonplusone} \int_{-\infty}^{\infty}\! (x-\xi)\xi^{n-1}K_{1,1}(x,\xi)\ d\xi
= \int_{-\infty}^{\infty}\!
(x-\xi)\xi^{n-1}\left(K_{1,1}(x,\xi)+\frac{1}{h_n^2}p_n(x)Q_n(\xi)\right)\ d\xi
\\ -\int_{-\infty}^{\infty}\! (x-\xi)\xi^{n-1}\frac{1}{h_n^2}p_n(x)Q_n(\xi)\
d\xi.
\end{multline}
The expression between brackets in the first term in the right hand side of
\eqref{K11:ntonplusone} is nothing but the kernel $K_{1,1}(x,\xi)$ with $n$
replaced by $n+1$ in \eqref{def:K11}, so this integral is zero as before. We
then have
\begin{eqnarray}\nonumber \int_{-\infty}^{\infty}\! (x-\xi)\xi^{n-1}K_{1,1}(x,\xi)\
d\xi &=& -\frac{1}{h_n^2}p_n(x)\int_{-\infty}^{\infty}\!
(x-\xi)\xi^{n-1}Q_n(\xi)\ d\xi \\
\nonumber &=&
-\frac{1}{h_n^2}p_n(x)\int_{-\infty}^{\infty}\!\int_{-\infty}^{\infty}\!
(x-\xi)\xi^{n-1}q_n(\eta)e^{-V(\xi)-W(\eta)+\tau\xi\eta}\ d\xi\ d\eta\\
\nonumber &=& p_n(x),
\end{eqnarray}
because of biorthogonality. Using this relation in \eqref{proofdom0}, we obtain
\eqref{K11til:asy2}.

Equations \eqref{K12til:asy1}--\eqref{K12til:asy2} are immediate by the
definition \eqref{def:K12sum}. Equations
\eqref{K22til:asy1}--\eqref{K22til:asy2} can be obtained similarly as before.

Finally, let us check \eqref{K21til:asy1}.  By substituting the expansion
$\frac{1}{w-\eta} = \sum_{i=1}^{\infty}\frac{\eta^{i-1}}{w^i}$ in
\eqref{def:K21til} we obtain
\begin{equation}\label{proofdom3}
\K_{2,1}(w,v) =
\sum_{i=1}^{\infty}\frac{1}{w^i}\int_{-\infty}^{\infty}\frac{1}{v-\xi}\left(\int_{-\infty}^{\infty}\eta^{i-1}K_{2,1}(\eta,\xi)\
d\eta\right) d\xi.
\end{equation}
By \eqref{def:K21} we see that \begin{equation}\label{proofdom4}
\int_{-\infty}^{\infty}\! \eta^{i-1}K_{2,1}(\eta,\xi)\ d\eta =
-\int_{-\infty}^{\infty}\! \eta^{i-1}e^{-V(\xi)-W(\eta)+\tau\xi\eta}\ d\eta
+\int_{-\infty}^{\infty}\!
\eta^{i-1}\sum_{j=0}^{n-1}\frac{1}{h_j^2}P_j(\eta)Q_j(\xi)\ d\eta.
\end{equation}
Now we claim that \begin{equation}\label{K22:reprbis} \int_{-\infty}^{\infty}\!
q(\eta)\sum_{j=0}^{n-1}\frac{1}{h_j^2}P_j(\eta)Q_j(\xi)\ d\eta =
\int_{-\infty}^{\infty}\! q(\eta)e^{-V(\xi)-W(\eta)+\tau\xi\eta}\ d\eta
\end{equation}
for any polynomial $q$ of degree at most $n-1$. (Once again, by linearity it
suffices to check \eqref{K22:reprbis} when $q$ equals one of the biorthogonal
polynomials $q_i$ for $i=0,\ldots,n-1$, in which case it follows directly from
the biorthogonality relations.) Using \eqref{K22:reprbis} with
$q(\eta)=\eta^{i-1}$, we see that if $i=1,\ldots,n$ then both terms in the
right hand side of \eqref{proofdom4} cancel each other. For $i=n+1$ we get
\begin{eqnarray*} & & \int_{-\infty}^{\infty}\!
\eta^{n}\sum_{j=0}^{n-1}\frac{1}{h_j^2}P_j(\eta)Q_j(\xi)\ d\eta \\ &=&
\left(\int_{-\infty}^{\infty}\!
\eta^{n}\sum_{j=0}^{n}\frac{1}{h_j^2}P_j(\eta)Q_j(\xi)\
d\eta\right)-\int_{-\infty}^{\infty}\!\eta^n \frac{1}{h_n^2}P_n(\eta)Q_n(\xi)\
d\eta
\\ &=& \left(\int_{-\infty}^{\infty}\! \eta^{n}e^{-V(\xi)-W(\eta)+\tau\xi\eta}\
d\eta\right)-\frac{Q_n(\xi)}{h_n^2}\int_{-\infty}^{\infty}\!\int_{-\infty}^{\infty}\!\eta^n
p_n(x)e^{-V(x)-W(\eta)+\tau x\eta}\ dx\ d\eta
\\ &=& \left(\int_{-\infty}^{\infty}\!
\eta^{n}e^{-V(\xi)-W(\eta)+\tau\xi\eta}\ d\eta\right) -Q_n(\xi),
\end{eqnarray*}
where we used \eqref{K22:reprbis} (with $n$ replaced by $n+1$) and the
biorthogonality relations. Substituting this in \eqref{proofdom4} we find that
$$\int_{-\infty}^{\infty}\! \eta^{i-1}K_{2,1}(\eta,\xi)\ d\eta =
-\delta_{i,n+1}Q_n(\xi),$$ for any $i=1,\ldots,n+1$. Using this in
\eqref{proofdom3} we obtain
$$ \K_{2,1}(w,v)= -w^{-n-1}\int_{-\infty}^{\infty}\!
\frac{Q_n(\xi)}{v-\xi}\ d\xi+O(w^{-n-2}) = -w^{-n-1}\Q_{n}(v)+O(w^{-n-2}),
$$
which is \eqref{K21til:asy1}. Equation \eqref{K21til:asy2} can be obtained
similarly.
\end{proof}

Finally, we will also need the following additional properties of the kernels
$\K_{i,j}$.

\begin{lemma}\label{lemma:kernel:integrate} (Integral formula relations between the kernels $\K_{i,j}$:)
We have
\begin{equation}\label{eqint:P}
\P_n(w) = \int_{-\infty}^{\infty}\!\int_{-\infty}^{\infty}\!
p_n(\xi)e^{-V(\xi)-W(\eta)+\tau\xi\eta}\frac{1}{w-\eta}\ d\xi\ d\eta,
\end{equation}
\begin{equation}\label{eqint:K22}
\K_{2,2}(w,y) = \frac{1}{y-w}\int_{-\infty}^{\infty}\!\int_{-\infty}^{\infty}\!
\K_{1,2}(\xi,y)e^{-V(\xi)-W(\eta)+\tau\xi\eta}\frac{y-\eta}{w-\eta}\ d\xi\
d\eta,
\end{equation}
and
\begin{equation}\label{eqint:K21}
\K_{2,1}(w,v) = \int_{-\infty}^{\infty}\!\int_{-\infty}^{\infty}\!
\K_{1,1}(\xi,v)e^{-V(\xi)-W(\eta)+\tau\xi\eta}\frac{1}{w-\eta}\ d\xi\ d\eta.
\end{equation}
\end{lemma}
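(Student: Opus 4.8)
The plan is to verify the three identities \eqref{eqint:P}, \eqref{eqint:K22} and \eqref{eqint:K21} one at a time, in each case by substituting the summation formulas of Lemma~\ref{lemma:sumformulas} together with the defining integrals \eqref{def:Pn}--\eqref{def:Qn} and \eqref{def:Pn:til}--\eqref{def:Qn:til}, and then recognizing the outcome. Throughout, the sums run only up to $n-1$, and since $v,w\in\cee\setminus\er$ the Cauchy factors $\frac{1}{w-\eta}$ and $\frac{1}{v-\xi}$ have no poles on the real axis, so the integrals converge absolutely and interchanging the finite sum with the integral is justified without further comment.

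First, \eqref{eqint:P} is nothing but the definition \eqref{def:Pn:til} of $\P_n(w)$ specialized to $i=n$, so there is nothing to prove. Next I would establish \eqref{eqint:K21}. Substituting the summation formula \eqref{def:K11sum} for $\K_{1,1}(\xi,v)$ into the right-hand side and splitting the integral gives
\begin{multline*}
\int_{-\infty}^{\infty}\!\int_{-\infty}^{\infty}\!\K_{1,1}(\xi,v)e^{-V(\xi)-W(\eta)+\tau\xi\eta}\frac{1}{w-\eta}\ d\xi\ d\eta
= \sum_{i=0}^{n-1}\frac{1}{h_i^2}\Q_i(v)\int_{-\infty}^{\infty}\!\int_{-\infty}^{\infty}\! p_i(\xi)e^{-V(\xi)-W(\eta)+\tau\xi\eta}\frac{1}{w-\eta}\ d\xi\ d\eta\\
-\int_{-\infty}^{\infty}\!\int_{-\infty}^{\infty}\!\frac{e^{-V(\xi)-W(\eta)+\tau\xi\eta}}{(w-\eta)(v-\xi)}\ d\xi\ d\eta.
\end{multline*}
The inner double integral in the first term equals $\P_i(w)$ by \eqref{def:Pn:til}, so the right-hand side becomes exactly the summation formula \eqref{def:K21sum} for $\K_{2,1}(w,v)$, which proves \eqref{eqint:K21}.

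Finally I would verify \eqref{eqint:K22} by computing both sides independently and matching them. On the one hand, inserting the definition \eqref{def:K22} of $K_{2,2}$ into the defining formula \eqref{def:K22til} yields
$$\K_{2,2}(w,y) = \frac{1}{y-w}\sum_{i=0}^{n-1}\frac{1}{h_i^2}q_i(y)\int_{-\infty}^{\infty}\frac{y-\eta}{w-\eta}P_i(\eta)\ d\eta.$$
On the other hand, inserting \eqref{def:K12sum} into the right-hand side of \eqref{eqint:K22} and performing the $\xi$-integral by means of \eqref{def:Pn}, which gives $\int_{-\infty}^{\infty} p_i(\xi)e^{-V(\xi)-W(\eta)+\tau\xi\eta}\ d\xi = P_i(\eta)$, produces precisely the same expression. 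Comparing the two establishes \eqref{eqint:K22}.

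The computations are entirely mechanical, so there is no genuine obstacle; the only point meriting a word of care is the interchange of summation and integration, which is harmless here because each sum is finite and the Cauchy kernels are singularity-free on the real line for $v,w\notin\er$, guaranteeing absolute convergence.
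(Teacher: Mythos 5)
Your proof is correct and follows essentially the same route as the paper's: the paper likewise disposes of \eqref{eqint:P} as an instance of the definition \eqref{def:Pn:til}, derives \eqref{eqint:K21} from \eqref{def:K21sum}, \eqref{def:K11sum} and \eqref{def:Pn:til}, and derives \eqref{eqint:K22} from \eqref{def:K22til}, \eqref{def:K22} and \eqref{def:K12sum}. You have merely written out the mechanical substitutions that the paper's one-line proof leaves implicit, including the harmless interchange of the finite sums with the integrals.
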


\begin{proof}
Equation \eqref{eqint:P} is just \eqref{def:Pn:til}. Equation \eqref{eqint:K22}
follows from the definitions in \eqref{def:K22til}, \eqref{def:K22} and
\eqref{def:K12sum}. Finally, equation \eqref{eqint:K21} follows by
\eqref{def:K21sum}, \eqref{def:K11sum} and \eqref{def:Pn:til}.
\end{proof}

\subsection{Proof of Theorem~\ref{theorem:avpol:gen}}
\label{subsection:proof:gen}

In this section we prove Theorem~\ref{theorem:avpol:gen}. To this end we use
Theorem~\ref{theorem:avpol:num} together with a mechanism to transform an
external source in the matrix $M_1$ (or $M_2$) in the \emph{numerator}, into an
external source in the matrix $M_2$ (or $M_1$ respectively) in the
\emph{denominator}. That is, we will apply transformations of the form
\begin{equation}\label{stepinduc1} (I,J,K,L) \to (I-1,J,K,L+1),
\end{equation}
or \begin{equation}\label{stepinduc2} (I,J,K,L) \to (I,J-1,K+1,L).
\end{equation}
It will then suffice to show that each of the objects in
\eqref{def:Pn:til}--\eqref{def:K22til} transforms in an appropriate way under
these transformations; this will be achieved by virtue of
Lemmas~\ref{lemma:K12repr}--\ref{lemma:kernel:integrate}.

As in \cite{BDS,Del}, we will also need an appropriate use of partial fraction
decomposition. For given complex numbers
$y_1,\ldots,y_J,w_1,\ldots,w_{L+1}\in\cee$, the partial fraction decomposition
which is of interest to us is of the form
\begin{equation}\label{pfd1}\frac{\prod_{j=1}^J (\eta-y_j)}{(\eta-w_{L+1})\prod_{l=1}^L (\eta-w_l)}
= \frac{c_{L+1}}{\eta-w_{L+1}}+\sum_{l=1}^L \frac{c_l}{\eta-w_l}+P(\eta).
\end{equation}
Here $c_1,\ldots,c_{L+1}\in\cee$ and $P$ is a polynomial of degree $J-L-1$. (We
put $P\equiv 0$ when $J-L-1<0$.)

Note that \eqref{pfd1} also implies the more complicated partial fraction
decomposition
\begin{equation}\label{pfd2} \frac{\prod_{j=1}^J (\eta-y_j)}{(\eta-w_{L+1})\prod_{l=1}^L (\eta-w_l)}
= \frac{c_{L+1}(\eta-y_j)}{(w_{L+1}-y_j)(\eta-w_{L+1})}+\sum_{l=1}^L
\frac{c_l(\eta-y_j)}{(w_l-y_j)(\eta-w_l)}+P(\eta)-P(y_j),
\end{equation}
for any fixed $j\in\{1,\ldots,J\}$. We leave it to the reader to obtain
\eqref{pfd2} from \eqref{pfd1}.

Now we are ready for the proof of Theorem~\ref{theorem:avpol:gen}.
\smallskip

\textit{Proof of Theorem~\ref{theorem:avpol:gen}}. By symmetry we can assume
without loss of generality that $I-K\geq J-L$. We will show that the formula in
\eqref{gencase:a} is compatible with any transformation of the form
\eqref{stepinduc1}.

Consider the weight function
\begin{equation}\label{wtilUv}
\til w(x,y) = \frac{\prod_{i=1}^{I-1}(x-x_i)}{\prod_{k=1}^{K}(x-v_k)}
\frac{\prod_{j=1}^{J}(y-y_j)}{\prod_{l=1}^{L}(y-w_l)} e^{-V(x)-W(y)+\tau xy}.
\end{equation}
(Note that there are only $I-1$ factors $x_i$.) Denote by $A_n(x)$ the monic
biorthogonal polynomial of degree $n$ with respect to this weight function,
defined by the orthogonality relations
\begin{equation}
 \int_{-\infty}^{\infty}\!\int_{-\infty}^{\infty}\! A_n(x)q(y)\til
 w(x,y)\ dx\ dy = 0
\end{equation}
for all polynomials $q$ of degree at most $n-1$. Theorem~\ref{theorem:avpol1}
implies that
$$
P_{n}^{[I,J,K,L]}(x_1,\ldots,x_I;y_1,\ldots,y_J;v_1,\ldots,v_K;w_1,\ldots,w_L)
= A_n(x_I),
$$
while from Theorem~\ref{theorem:avpol2} and \eqref{def:Pn:til} it follows that
\begin{multline*}
P_{n}^{[I-1,J,K,L+1]}(x_1,\ldots,x_{I-1};y_1,\ldots,y_J;v_1,\ldots,v_K;w_1,\ldots,w_{L+1})
\\ = \frac{1}{\til h_{n-1}^2}\int_{-\infty}^{\infty}\!\int_{-\infty}^{\infty}\!
A_{n-1}(x_I)\frac{\til w(x_I,\eta)}{w_{L+1}-\eta}\ dx_I\ d\eta, \end{multline*}
where $\til h_{n-1}^2$ is a constant which depends on each of the numbers
$x_i,y_j,v_k,w_l$, but \emph{not} on $w_{L+1}$. By combining these two
formulas, one gets
\begin{multline}\label{mechanism:induction}
P_{n}^{[I-1,J,K,L+1]}(x_1,\ldots,x_{I-1};y_1,\ldots,y_J;v_1,\ldots,v_K;w_1,\ldots,w_{L+1})
\\ = \frac{1}{\til h_{n-1}^2}\int_{-\infty}^{\infty}\!\int_{-\infty}^{\infty}\!
P_{n-1}^{[I,J,K,L]}(x_1,\ldots,x_I;y_1,\ldots,y_J;v_1,\ldots,v_K;w_1,\ldots,w_L)
\frac{\til w(x_I,\eta)}{w_{L+1}-\eta}\ dx_I\ d\eta.
\end{multline}
Now assume, by induction, that \eqref{gencase:a} holds for $P_{n}^{[I,J,K,L]}$.
By substituting this expression (with $n-1$ instead of $n$) for
$P_{n-1}^{[I,J,K,L]}$ in the right hand side of \eqref{mechanism:induction}, we
get
\begin{multline}\label{proof:gen0}
P_{n}^{[I-1,J,K,L+1]}(x_1,\ldots,x_{I-1};y_1,\ldots,y_J;v_1,\ldots,v_K;w_1,\ldots,w_{L+1})
\\ =  C_0\int_{-\infty}^{\infty}\!\int_{-\infty}^{\infty}\!
\frac{\prod_{k=1}^K(x_I-v_k)}{\prod_{i=1}^{I-1}(x_{I}-x_i)}\frac{\til
w(x_I,\eta)}{w_{L+1}-\eta} \\ \times
\det\begin{pmatrix}
\K_{1,1}(x_1,v_1) & \ldots & \K_{1,1}(x_I,v_1) & \K_{2,1}(w_1,v_1) & \ldots & \K_{2,1}(w_L,v_1) \\
\vdots & & \vdots & \vdots & & \vdots\\
\K_{1,1}(x_1,v_K) & \ldots & \K_{1,1}(x_I,v_K) & \K_{2,1}(w_1,v_K) & \ldots & \K_{2,1}(w_L,v_K) \\
\K_{1,2}(x_1,y_1) & \ldots & \K_{1,2}(x_I,y_1) & \K_{2,2}(w_1,y_1) & \ldots & \K_{2,2}(w_L,y_1) \\
\vdots & & \vdots & \vdots & & \vdots \\
\K_{1,2}(x_1,y_J) & \ldots & \K_{1,2}(x_I,y_J) & \K_{2,2}(w_1,y_J) & \ldots & \K_{2,2}(w_L,y_J) \\
p_{n-1+J-L}(x_1) & \ldots & p_{n-1+J-L}(x_I) & \P_{n-1+J-L}(w_1) & \ldots & \P_{n-1+J-L}(w_L) \\
\vdots & & \vdots & \vdots & & \vdots \\
p_{n-2+I-K}(x_1) & \ldots & p_{n-2+I-K}(x_I) & \P_{n-2+I-K}(w_1) & \ldots &
\P_{n-2+I-K}(w_L)
\end{pmatrix}dx_I\ d\eta,
\end{multline}
where $C_0$ is a new constant which depends on each of the numbers
$x_i,y_j,v_k,w_l$, but \emph{not} on $w_{L+1}$. We will work this out. Note
that the factor $\frac{\prod_{i=1}^{I-1}(x_{I}-x_i)}{\prod_{k=1}^{K}(x_I-v_k)}$
in the definition of $\til w(x_I,\eta)$ in \eqref{wtilUv} cancels with the
prefactor in the integrand of \eqref{proof:gen0}. By linearity, we can then
apply the double integration entrywise inside the $I$th column of the matrix in
\eqref{proof:gen0}. Then the entries in the $I$th column of \eqref{proof:gen0}
transform into expressions of the form \begin{equation}\label{proofgen:1}
\int_{-\infty}^{\infty}\!\int_{-\infty}^{\infty}\!
\K_{1,1}(x_I,v_k)\frac{\prod_{j=1}^{J}(\eta-y_j)}{\prod_{l=1}^{L}(\eta-w_l)}\frac{1}{w_{L+1}-\eta}e^{-V(x_I)-W(\eta)+\tau
x_I \eta}\ dx_I\ d\eta,
\end{equation}
for $k=1,\ldots,K$,
\begin{equation}\label{proofgen:2} \int_{-\infty}^{\infty}\!\int_{-\infty}^{\infty}\!
\K_{1,2}(x_I,y_j)\frac{\prod_{j=1}^{J}(\eta-y_j)}{\prod_{l=1}^{L}(\eta-w_l)}\frac{1}{w_{L+1}-\eta}e^{-V(x_I)-W(\eta)+\tau
x_I \eta}\ dx_I\ d\eta,
\end{equation}
for $j=1,\ldots,J$, and \begin{equation}\label{proofgen:3}
\int_{-\infty}^{\infty}\!\int_{-\infty}^{\infty}\!
p_{n-1+p}(x_I)\frac{\prod_{j=1}^{J}(\eta-y_j)}{\prod_{l=1}^{L}(\eta-w_l)}\frac{1}{w_{L+1}-\eta}e^{-V(x_I)-W(\eta)+\tau
x_I \eta}\ dx_I\ d\eta,
\end{equation}
for $p=J-L,\ldots,I-K-1$.

Next, we substitute the partial fraction decomposition \eqref{pfd1} in
\eqref{proofgen:1} and \eqref{proofgen:3}, and we substitute \eqref{pfd2} in
\eqref{proofgen:2}. Doing this for each of the entries in the $I$th column of
\eqref{proof:gen0}, the determinant can be split in a sum of three terms,
\begin{equation}\label{threeterms} D_1+D_2+D_3,
\end{equation}
corresponding to the three terms on the right hand sides of
\eqref{pfd1}--\eqref{pfd2}.

For the first term $D_1$ (which is obtained by selecting the terms
$\frac{c_{L+1}}{\eta-w_{L+1}}$ and
$\frac{c_{L+1}(\eta-y_j)}{(w_{L+1}-y_j)(\eta-w_{L+1})}$ in \eqref{pfd1} and
\eqref{pfd2} respectively), the expressions
\eqref{proofgen:1}--\eqref{proofgen:3} transform into
 \begin{equation}\label{proofgen:4}
c_{L+1}\int_{-\infty}^{\infty}\!\int_{-\infty}^{\infty}\!
\K_{1,1}(x_I,v_k)\frac{1}{w_{L+1}-\eta}e^{-V(x_I)-W(\eta)+\tau x_I \eta}\ dx_I\
d\eta,
\end{equation}
for $k=1,\ldots,K$,
\begin{equation}\label{proofgen:5} c_{L+1}\int_{-\infty}^{\infty}\!\int_{-\infty}^{\infty}\!
\K_{1,2}(x_I,y_j)\frac{\eta-y_j}{(w_{L+1}-y_j)(w_{L+1}-\eta)}e^{-V(x_I)-W(\eta)+\tau
x_I \eta}\ dx_I\ d\eta,
\end{equation}
for $j=1,\ldots,J$, and
\begin{equation}\label{proofgen:6}
c_{L+1}\int_{-\infty}^{\infty}\!\int_{-\infty}^{\infty}\!
p_{n-1+p}(x_I)\frac{1}{w_{L+1}-\eta}e^{-V(x_I)-W(\eta)+\tau x_I \eta}\ dx_I\
d\eta,
\end{equation}
for $p=J-L,\ldots,I-K-1$. Comparing this with
Lemma~\ref{lemma:kernel:integrate}, we see that
\eqref{proofgen:4}--\eqref{proofgen:6} are nothing but $c_{L+1}$ times
$\K_{2,1}(w_{L+1},v_k)$, $\K_{2,2}(w_{L+1},y_j)$ and $\P_{n-1+p}(w_{L+1})$
respectively. The factor $c_{L+1}$ can be taken out of the $I$th column and be
put in front of the determinant. Note that
$$ c_{L+1}= \frac{\prod_{j=1}^J (w_{L+1}-y_j)}{\prod_{l=1}^L (w_{L+1}-w_l)},
$$
a fact which is easily checked from \eqref{pfd1}.

For the second term $D_2$ in \eqref{threeterms} (which is obtained by selecting
the terms $\sum_{l=1}^L \frac{c_l}{\eta-w_l}$ and $\sum_{l=1}^L
\frac{c_l(\eta-y_j)}{(w_l-y_j)(\eta-w_l)}$ in \eqref{pfd1}--\eqref{pfd2}
respectively), we obtain similarly that \eqref{proofgen:1}--\eqref{proofgen:3}
transform into a linear combination of $\K_{2,1}(w_{l},v_k)$,
$\K_{2,2}(w_{l},y_j)$ and $\P_{n-1+p}(w_{l})$, $l=1,\ldots,L$. But then the
$I$th column in \eqref{proof:gen0} is a linear combination of columns
$I+1,\ldots,I+L$ and so the determinant $D_2$ vanishes.

Finally, the third determinant $D_3$ (corresponding to the polynomial parts
$P(\eta)$ and $P(\eta)-P(y_j)$ in \eqref{pfd1}--\eqref{pfd2} respectively)
vanishes as well, by virtue of Lemmas~\ref{lemma:K12repr} and
\ref{lemma:vanishing}.

Summarizing, we showed that \eqref{proof:gen0} equals
\begin{multline}\label{gencase:proof}
P_{n}^{[I-1,J,K,L+1]}(x_1,\ldots,x_{I-1};y_1,\ldots,y_J;v_1,\ldots,v_K;w_1,\ldots,w_{L+1})
 =  C_1
\frac{\prod_{j=1}^J(w_{L+1}-y_j)}{\prod_{l=1}^L(w_{L+1}-w_l)}\\
\times\det\begin{pmatrix}
\K_{1,1}(x_1,v_1) & \ldots & \K_{1,1}(x_{I-1},v_1) & \K_{2,1}(w_1,v_1) & \ldots & \K_{2,1}(w_{L+1},v_1) \\
\vdots & & \vdots & \vdots & & \vdots\\
\K_{1,1}(x_1,v_K) & \ldots & \K_{1,1}(x_{I-1},v_K) & \K_{2,1}(w_1,v_K) & \ldots & \K_{2,1}(w_{L+1},v_K) \\
\K_{1,2}(x_1,y_1) & \ldots & \K_{2,1}(x_{I-1},y_1) & \K_{2,2}(w_1,y_1) & \ldots & \K_{2,2}(w_{L+1},y_1) \\
\vdots & & \vdots & \vdots & & \vdots \\
\K_{1,2}(x_1,y_J) & \ldots & \K_{2,1}(x_{I-1},y_J) & \K_{2,2}(w_1,y_J) & \ldots & \K_{2,2}(w_{L+1},y_J) \\
p_{n-1+J-L}(x_1) & \ldots & p_{n-1+J-L}(x_{I-1}) & \P_{n-1+J-L}(w_1) & \ldots & \P_{n-1+J-L}(w_{L+1}) \\
\vdots & & \vdots & \vdots & & \vdots \\
p_{n-2+I-K}(x_1) & \ldots & p_{n-2+I-K}(x_{I-1}) & \P_{n-2+I-K}(w_1) & \ldots &
\P_{n-2+I-K}(w_{L+1})
\end{pmatrix},
\end{multline}
where $C_1=\pm C_0$ depends on each of the numbers $x_i,y_j,v_k,w_l$, but
\emph{not} on $w_{L+1}$.

To find $C_1$, we compute the leading order behavior of \eqref{gencase:proof}
in $w_{L+1}$; this can be done by replacing each of the entries in the last
column of \eqref{gencase:proof} by their dominant term in $w_{L+1}$ as in
Lemma~\ref{lemma:leadingterms}. (More precisely, if $I-K>J-L$ then only the
entry $\P_{n-1+J-L}(w_{L+1})$ in \eqref{gencase:proof} contributes to the
dominant term, while if $I-K=J-L$ then there is a contribution from \emph{all}
entries in the last column of \eqref{gencase:proof}.) Taking into account that
\begin{multline*} 
P_{n}^{[I-1,J,K,L+1]}(x_1,\ldots,x_{I-1};y_1,\ldots,y_J;v_1,\ldots,v_K;w_1,\ldots,w_{L+1})
\\=
w_{L+1}^{-n}P_{n}^{[I-1,J,K,L]}(x_1,\ldots,x_{I-1};y_1,\ldots,y_J;v_1,\ldots,v_K;w_1,\ldots,w_{L})+O\left(w_{L+1}^{-n-1}\right),
\end{multline*}
as $w_{L+1}\to\infty$, a fact which trivially follows from \eqref{avcharpol},
we then obtain an expression for $C_1$.

In a similar way we can compute the leading order behavior of \eqref{gencase:a}
in $x_I$, i.e., we can write $P_{n}^{[I,J,K,L]}$ in the form
\begin{multline}\label{gencase:proof2}
C_2\frac{\prod_{k=1}^K(x_{I}-v_k)}{\prod_{i=1}^{I-1}(x_{I}-x_i)}\\
\times\det\begin{pmatrix}
\K_{1,1}(x_1,v_1) & \ldots & \K_{1,1}(x_{I},v_1) & \K_{2,1}(w_1,v_1) & \ldots & \K_{2,1}(w_{L},v_1) \\
\vdots & & \vdots & \vdots & & \vdots\\
\K_{1,1}(x_1,v_K) & \ldots & \K_{1,1}(x_{I},v_K) & \K_{2,1}(w_1,v_K) & \ldots & \K_{2,1}(w_{L},v_K) \\
\K_{1,2}(x_1,y_1) & \ldots & \K_{2,1}(x_{I},y_1) & \K_{2,2}(w_1,y_1) & \ldots & \K_{2,2}(w_{L},y_1) \\
\vdots & & \vdots & \vdots & & \vdots \\
\K_{1,2}(x_1,y_J) & \ldots & \K_{2,1}(x_{I},y_J) & \K_{2,2}(w_1,y_J) & \ldots & \K_{2,2}(w_{L},y_J) \\
p_{n+J-L}(x_1) & \ldots & p_{n+J-L}(x_{I}) & \P_{n+J-L}(w_1) & \ldots & \P_{n+J-L}(w_{L}) \\
\vdots & & \vdots & \vdots & & \vdots \\
p_{n+I-K-1}(x_1) & \ldots & p_{n+I-K-1}(x_{I}) & \P_{n+I-K-1}(w_1) & \ldots &
\P_{n+I-K-1}(w_{L})
\end{pmatrix},
\end{multline}
and then we can replace each of the entries in the $I$th column of
\eqref{gencase:proof2} by their dominant term in $x_I$ as in
Lemma~\ref{lemma:leadingterms}. (More precisely, if $I-K>J-L$ then only the
entry $p_{n+I-K-1}(x_{I})$ in \eqref{gencase:proof2} contributes to the
dominant term, while if $I-K=J-L$ then there is a contribution from \emph{all}
entries in the $I$th column of \eqref{gencase:proof2}.) Taking into account
that
\begin{multline*} 
P_{n}^{[I,J,K,L]}(x_1,\ldots,x_{I};y_1,\ldots,y_J;v_1,\ldots,v_K;w_1,\ldots,w_{L})
\\=
x_{I}^{n}P_{n}^{[I-1,J,K,L]}(x_1,\ldots,x_{I-1};y_1,\ldots,y_J;v_1,\ldots,v_K;w_1,\ldots,w_{L})+O\left(x_{I}^{n-1}\right),
\end{multline*}
as $x_I\to\infty$, we then obtain an expression for  $C_2$ in
\eqref{gencase:proof2}. A straightforward calculation now shows that the so
obtained expressions for $C_1$ and $C_2$ in \eqref{gencase:proof} and
\eqref{gencase:proof2} are equal to each other, up to a shift $n\mapsto n-1$ of
the index, and a factor $(-1)^{I+J+K+1}/h_{n+J-L-1}^2$. So the constants $C_1$
and $C_2$ are related precisely in the way that is required for applying the
transformation \eqref{stepinduc1} to the prefactor in \eqref{gencase:a}.

Summarizing, we have shown that \eqref{gencase:a} is compatible with any
transformation of the form \eqref{stepinduc1}. In a similar way one shows this
for transformations of the form \eqref{stepinduc2}. Then applying the
transformations \eqref{stepinduc1}--\eqref{stepinduc2} repeatedly, and using
Theorem~\ref{theorem:avpol:num} as induction basis, we obtain \eqref{gencase:a}
in its full generality. $\bol$

\begin{remark} An alternative proof of Theorem~\ref{theorem:avpol:gen} can be obtained
by establishing formulas of Christoffel-Uvarov type \cite{BDS,Sz,Uv}. To this
end one can use similar ideas as in the proof above. We preferred the given
proof since it is probably shorter.
\end{remark}

\section{Applications}
\label{section:appl}

In this section we give two applications of our results. Here we follow ideas
for the $1$-matrix model.

\subsection{Generating function for averages of products of traces}
\label{subsection:traces}

In this section we show how the results in this paper allow to obtain
expectation values of products of traces of the form
\begin{equation*}\frac{1}{Z_n}
\int\!\int\!\left(\prod_{i=1}^I \textrm{Tr} (M_1^{m_i})\prod_{j=1}^J
\textrm{Tr} (M_2^{n_j})\right) e^{\textrm{Tr}(-V(M_1)-W(M_2)+\tau M_1M_2)}\
dM_1\ dM_2,
\end{equation*}
or equivalently
\begin{multline}\label{traces2}\frac{1}{\til Z_n}\int_{-\infty}^{\infty}\!\ldots\int_{-\infty}^{\infty}
\prod_{i=1}^I\left(\sum_{k=1}^n
\lam_k^{m_i}\right)\prod_{j=1}^J\left(\sum_{k=1}^n
\mu_k^{n_j}\right)\\
\times\prod_{i=1}^n  \left( e^{-V(\lam_i)}
e^{-W(\mu_i)}\right)\Delta(\mathbf{\lam})\Delta(\mathbf{\mu})\det(e^{\tau\lam_i\mu_j})_{i,j=1}^n
\prod_{i=1}^n \left( d\lam_i\ d\mu_i\right),
\end{multline}
with exponents $m_1,\ldots,m_I;n_1,\ldots,n_J\in\enn\cup\{0\}$, see
\eqref{inteigs}. Here we follow Berg\`ere \cite{Bergere2}. We start from
\eqref{avcharpol:Harish} with $I=K$ and $J=L$:
\begin{multline}\label{traces3}\frac{1}{\til Z_n}\int_{-\infty}^{\infty}\!\ldots\int_{-\infty}^{\infty}
\prod_{i=1}^I\left(\frac{\prod_{k=1}^n (x_i-\lam_k)}{\prod_{k=1}^n
(v_i-\lam_k)}\right)\prod_{j=1}^J\left(\frac{\prod_{k=1}^n
(y_j-\lam_k)}{\prod_{k=1}^n (w_j-\lam_k)}\right)\\
\times\prod_{i=1}^n  \left( e^{-V(\lam_i)}
e^{-W(\mu_i)}\right)\Delta(\mathbf{\lam})\Delta(\mathbf{\mu})\det(e^{\tau\lam_i\mu_j})_{i,j=1}^n
\prod_{i=1}^n \left( d\lam_i\ d\mu_i\right).
\end{multline}
Applying the operator \begin{equation}\label{operatordiff}\prod_{i=1}^I
\left(\frac{\partial}{\partial x_i}\right)_{v_i=x_i}\prod_{j=1}^J
\left(\frac{\partial}{\partial y_i}\right)_{w_j=y_j}\end{equation} to
\eqref{traces3} leads to the (multi-variate) Cauchy transform
\begin{multline}\label{traces4}\frac{1}{\til Z_n}\int_{-\infty}^{\infty}\!\ldots\int_{-\infty}^{\infty}
\prod_{i=1}^I\left(\sum_{k=1}^n
\frac{1}{x_i-\lam_k}\right)\prod_{j=1}^J\left(\sum_{k=1}^n
\frac{1}{y_j-\mu_k}\right)\\
\times\prod_{i=1}^n  \left( e^{-V(\lam_i)}
e^{-W(\mu_i)}\right)\Delta(\mathbf{\lam})\Delta(\mathbf{\mu})\det(e^{\tau\lam_i\mu_j})_{i,j=1}^n
\prod_{i=1}^n \left( d\lam_i\ d\mu_i\right),
\end{multline}
where we assume that all $x_i,y_j\in\cee\setminus\er$. Note that the large
$x_i,y_j$ expansion of \eqref{traces4} is a formal power series whose
coefficients are the expectation values of products of traces \eqref{traces2}.

On the other hand, Theorem~\ref{theorem:avpol:gen} with $I=K$ and $J=L$ shows
that \eqref{traces3} equals
\begin{multline}\label{traces5}
= \frac{\prod_{i,k=1}^I(x_i-v_k)\prod_{j,l=1}^J(y_j-w_l)}{\prod_{1\leq i<j\leq
I}(x_j-x_i)(v_i-v_j)
\prod_{1\leq i<j\leq J}(y_j-y_i)(w_i-w_j)}\\
\times \det\begin{pmatrix}
\K_{1,1}(x_1,v_1) & \ldots & \K_{1,1}(x_1,v_I) & \K_{1,2}(x_1,y_1) & \ldots & \K_{1,2}(x_1,y_J) \\
\vdots & & \vdots & \vdots & & \vdots \\
\K_{1,1}(x_I,v_1) & \ldots & \K_{1,1}(x_I,v_I) & \K_{1,2}(x_I,y_1) & \ldots & \K_{1,2}(x_I,y_J) \\
\K_{2,1}(w_1,v_1) & \ldots & \K_{2,1}(w_1,v_I) & \K_{2,2}(w_1,y_1) & \ldots & \K_{2,2}(w_1,y_J) \\
\vdots & & \vdots & \vdots & & \vdots\\
\K_{2,1}(w_J,v_1) & \ldots & \K_{2,1}(w_J,v_I) & \K_{2,2}(w_J,y_1) & \ldots &
\K_{2,2}(w_J,y_J)
\end{pmatrix}.
\end{multline}
Applying the operator \eqref{operatordiff} to this determinant, and making a
small calculation, one gets
\begin{equation}\label{traces6}
\det\begin{pmatrix}
\widehat K_{1,1}(x_1,x_1) & \ldots & \K_{1,1}(x_1,x_I) & \K_{1,2}(x_1,y_1) & \ldots & \K_{1,2}(x_1,y_J) \\
\vdots & & \vdots & \vdots & & \vdots \\
\K_{1,1}(x_I,x_1) & \ldots & \widehat K_{1,1}(x_I,x_I) & \K_{1,2}(x_I,y_1) & \ldots & \K_{1,2}(x_I,y_J) \\
\K_{2,1}(y_1,x_1) & \ldots & \K_{2,1}(y_1,x_I) & \widehat K_{2,2}(y_1,y_1) & \ldots & \K_{2,2}(y_1,y_J) \\
\vdots & & \vdots & \vdots & & \vdots\\
\K_{2,1}(y_J,x_1) & \ldots & \K_{2,1}(y_J,x_I) & \K_{2,2}(y_J,y_1) & \ldots &
\widehat K_{2,2}(y_J,y_J)
\end{pmatrix},
\end{equation}
where the entries on the main diagonal are defined by $$\widehat K_{1,1}(x,x) =
\sum_{i=0}^{n-1}\frac{1}{h_i^2}p_i(x)\Q_i(x),$$ and
$$\widehat K_{2,2}(y,y) =
\sum_{i=0}^{n-1}\frac{1}{h_i^2}\P_i(y)q_i(y),$$ respectively; compare with
\eqref{def:K11sum} and \eqref{def:K22sum}. So \eqref{traces6} yields a
determinantal formula for the multi-variate Cauchy transform \eqref{traces4}.
By the above discussion, this is also the generating function for averages of
products of traces \eqref{traces2}.

\subsection{Eynard-Mehta theorem for correlation functions}
\label{subsection:EynardMehta}

In this section we show how Theorem~\ref{theorem:avpol:gen} can be used to give
an alternative proof of the Eynard-Mehta theorem for correlation functions in
the two-matrix model. Alternative proofs of this formula, which are applicable
for the more general model of random matrices coupled in a chain, can be found
in \cite{BR,EM,Joh,NF,TW}.

The \emph{correlation function} $R_{I,J}$ is defined from \eqref{inteigs} by
\begin{multline}\label{EM1} R_{I,J}(\lam_1,\ldots,\lam_{I};\mu_1,\ldots,\mu_{J})
= \frac{n!}{(n-I)!}\frac{n!}{(n-J)!}
\\ \times\frac{1}{\til
Z_n}\int_{-\infty}^{\infty}\!\ldots\int_{-\infty}^{\infty} \prod_{i=1}^n \left(
e^{-V(\lam_i)}
e^{-W(\mu_i)}\right)\Delta(\mathbf{\lam})\Delta(\mathbf{\mu})\det(e^{\tau
\lam_i \mu_j})_{i,j=1}^n \prod_{i=I+1}^{n} d\lam_i \prod_{j=J+1}^{n} d\mu_j.
\end{multline}
The \emph{Eynard-Mehta theorem} \cite{EM} then asserts that
\begin{multline}\label{EM2}
R_{I,J}(\lam_1,\ldots,\lam_{I};\mu_1,\ldots,\mu_{J})
\\ = \det\begin{pmatrix}
K_{1,1}(\lam_1,\lam_1) & \ldots & K_{1,1}(\lam_1,\lam_I) & K_{1,2}(\lam_1,\mu_1) & \ldots & K_{1,2}(\lam_1,\mu_J) \\
\vdots & & \vdots & \vdots & & \vdots\\
K_{1,1}(\lam_I,\lam_1) & \ldots & K_{1,1}(\lam_I,\lam_I) & K_{1,2}(\lam_I,\mu_1) & \ldots & K_{1,2}(\lam_I,\mu_J)\\
K_{2,1}(\mu_1,\lam_1) & \ldots & K_{2,1}(\mu_1,\lam_I) & K_{2,2}(\mu_1,\mu_1) & \ldots & K_{2,2}(\mu_1,\mu_J)\\
\vdots & & \vdots & \vdots & & \vdots \\
K_{2,1}(\mu_J,\lam_1) & \ldots & K_{2,1}(\mu_J,\lam_I) & K_{2,2}(\mu_J,\mu_1) & \ldots & K_{2,2}(\mu_J,\mu_J) \\
\end{pmatrix}.
\end{multline}

We now establish this formula by using the formulas in
Section~\ref{subsection:traces}. Here we follow \cite{BS}. We start from the
multi-variate Cauchy transform in \eqref{traces4}. Applying the operators
$$ f\mapsto \frac{1}{2\pi i}\lim_{\epsilon\to 0+}\left(f|_{x_i=\lam_i-\epsilon i} - f|_{x_i=\lam_i+\epsilon i}\right)$$
subsequently for $i=1,\ldots,I$ and then
$$ f\mapsto \frac{1}{2\pi i}\lim_{\epsilon\to 0+}\left(f|_{y_j=\mu_j-\epsilon i} - f|_{y_i=\mu_i+\epsilon i}\right)$$
for $j=1,\ldots,J$ to \eqref{traces4}, we obtain by the Stieltjes-Perron
inversion principle \cite{Sz} precisely the correlation function \eqref{EM1}.
On the other hand, applying these same operations to \eqref{traces6} and using
again the Stieltjes-Perron inversion principle leads to the right hand side of
\eqref{EM2}. This establishes \eqref{EM2}.

\section{Concluding remarks}
\label{section:chain}

The two-matrix model is a particular instance of a more general model,
sometimes referred to as \emph{random matrices coupled in a chain}, see e.g.\
\cite{AvMV,BR,EM,Joh,Mehta,NF,TW}. The Eynard-Mehta theorem for correlation
functions can be formulated for this more general model.

In view of this observation, it is natural to ask whether the results in this
paper can be extended to the more general model of random matrices coupled in a
chain. That is, one may ask whether the averages of products and ratios of
characteristic polynomials in this model can still be written as determinants
built out of (transformed) Eynard-Mehta kernels and biorthogonal polynomials. A
little thought reveals that such a result, if it exists, should be a
non-trivial extension of Theorem~\ref{theorem:avpol:gen}, except maybe for
special configurations of the external sources. This is an open problem.

Another question of interest is whether the results in this paper have an
analogue for the Cauchy two-matrix model in \cite{BGS}.


\begin{thebibliography}{99}

\bibitem{AvMcoupled}
    M. Adler and P. van Moerbeke,
    The spectrum of coupled random matrices,
    Ann. of Math. (2) 149 (1999), 921–-976.
\bibitem{AvMV}
    M. Adler, P. van Moerbeke, and P. Vanhaecke,
    Moment matrices and multi-component KP, with applications to
    random matrix theory,
    Comm. Math. Phys. 286 (2009), 1--38.
\bibitem{ADOS}
    G. Akemann, P. H. Damgaard, J. C. Osborn and K. Splittorff,
    A new chiral two-matrix theory for Dirac spectra with imaginary chemical potential,
    Nucl. Phys. B 766 (2007), 34--67.
\bibitem{AP}
    G. Akemann and A. Pottier,
    Ratios of characteristic polynomials in complex matrix models,
    J. Phys. A: Math. Gen. 37 (2004), 453--460.
\bibitem{AV}
    G. Akemann and Vernizzi,
    Characteristic polynomials of complex random matrix models,
    Nucl. Phys. B 660 (2003), 532--556.
\bibitem{BDS}
    J. Baik, P. Deift and E. Strahov,
    Products and ratios of characteristic polynomials of random hermitian matrices,
    J. Math. Phys. 44 (2003) 3657--3670.
\bibitem{Bergere}
    M. Berg\`ere,
    Correlation functions of complex matrix models,
    J. Phys. A: Math. Gen. 39 (2006), 15091--15134.
\bibitem{Bergere2}
    M. Berg\`ere,
    Biorthogonal polynomials for potentials of two variables and
    external sources at the denominator, arXiv:hep-th/0404126.
\bibitem{BEy}
    M. Berg\`ere and B. Eynard,
    Mixed correlation function and spectral curve for the 2-matrix model,
    J. Phys. A: Math. Gen. 39 (2006), 15091--15134.
\bibitem{BEH1}
    M. Bertola, B. Eynard, and J. Harnad,
    Duality, biorthogonal polynomials and multimatrix models,
    Comm. Math. Phys. 229 (2002), 73--120.
\bibitem{BEH2}
    M. Bertola, B. Eynard, and J. Harnad,
    Differential systems for biorthogonal polynomials appearing in 2-matrix models and the associated Riemann-Hilbert problem,
    Comm. Math. Phys. 243 (2003), 193--240.
\bibitem{BGS}
    M. Bertola, M. Gekhtman, and J. Szmigielski,
    The Cauchy two-matrix model, Comm. Math. Phys. 287 (2009), 983--1014.
\bibitem{BHI}
    M. Bertola, J. Harnad, and A. Its,
    Dual Riemann-Hilbert approach to biorthogonal polynomials,
    Unpublished manuscript 2006.
\bibitem{BK1}
    P.M. Bleher and A.B.J. Kuijlaars, Random matrices with
    external source and multiple orthogonal polynomials, Int.
    Math. Res. Not. 2004, no 3 (2004), 109--129.
\bibitem{BR}
    A. Borodin and E.M. Rains,
    Eynard-Mehta theorem, Schur process, and their Pfaffian analogs,
    J. Stat. Phys. 121 (2005), 291--317.
\bibitem{BS}
    A. Borodin and E. Strahov,
    Averages of characteristic polynomials in random matrix theory,
    Comm. Pure Appl. Math. 59 (2006), 161--253.
\bibitem{BH3}
    E. Br\'ezin and S. Hikami, Characteristic polynomials of random matrices,
    Comm. Math. Phys. 214 (2000), 111--135.
    4140--4149.
\bibitem{DK1}
    E. Daems and A.B.J. Kuijlaars,
    A Christoffel-Darboux formula for multiple orthogonal polynomials,
    J. Approx. Theory 130 (2004), 188--200.
\bibitem{Del}
    S. Delvaux,
    Average characteristic polynomials for multiple orthogonal polynomial ensembles,
    J. Approx. Theory 162 (2010), 1033--1067.
\bibitem{Desrosiers1}
    P. Desrosiers and P.J. Forrester,
    A note on biorthogonal ensembles,
    J. Approx. Theory 152 (2008), 167--187.
\bibitem{DiFran}
    P. Di Francesco, P. Ginsparg, and J. Zinn-Justin,
    2D Gravity and Random Matrices,
    Phys. Rep. 254 (1995), 1--169.
\bibitem{Duits2}
    M. Duits and A.B.J. Kuijlaars,
    Universality in the two matrix model: a Riemann-Hilbert steepest descent analysis,
    Comm. Pure Appl. Math. 62 (2009), 1076--1153.
\bibitem{EMcL}
    N.M. Ercolani and K.T.-R. McLaughlin,
    Asymptotics and integrable structures for
    biorthogonal polynomials associated to a random two-matrix model, Phys. D 152/153 (2001), 232--268.
\bibitem{EM}
    B. Eynard and M. Mehta,
    Matrices coupled in a chain. I. Eigenvalue correlations,
    J. Phys. A 31 (1998), 4449--4456.
\bibitem{EO}
    B. Eynard and N. Orantin,
    Topological expansion of the 2-matrix model correlation functions: diagrammatic rules for a residue formula,
    J. High Energy Phys. 12 (2005), 034.
\bibitem{FS}
    Y.V. Fyodorov and E. Strahov,
    An exact formula for general spectral correlation function of random
    Hermitian matrices,
    J. Phys. A: Math. Gen. 36 (2003), 3203--3213.
\bibitem{Joh}
    K. Johansson,
    Discrete polynuclear growth and determinantal processes,
    Comm. Math. Phys. 242 (2003), 277--329.
\bibitem{Kap}
    A.A. Kapaev,
    The Riemann-Hilbert problem for the biorthogonal polynomials,
    J. Phys. A 36 (2003), 4629--4640.
\bibitem{KS}
    J.P. Keating and N.C. Snaith,
    Random Matrix Theory and $\zeta(1/2 + it)$,
    Comm. Math. Phys. 214 (2000), 57--89.
\bibitem{KG}
    M. Kieburg and T. Guhr,
    Derivation of determinantal structures for random matrix ensembles in a new
    way, J. Phys. A 43 (2010), 075201.
\bibitem{Kui}
    A.B.J. Kuijlaars,
    Multiple orthogonal polynomial ensembles,
    in ``Recent Trends in Orthogonal Polynomials and Approximation Theory''
    (J. Arves\'u, F. Marcell\'an and A. Mart\'inez-Finkelshtein eds.),
    Contemporary Mathematics 507 (2010), 155--176.
\bibitem{KMcL}
    A.B.J. Kuijlaars and K. T.-R. McLaughlin,
    A Riemann-Hilbert problem for biorthogonal polynomials,
    J. Comput. Appl. Math. 178 (2005), 313--320.
\bibitem{Mehta}
    M.L. Mehta,
    Random Matrices, 3rd edition, Elsevier/Academic Press, Amsterdam, 2004.
\bibitem{Mehta2}
    M.L. Mehta,
    A method of integration over matrix variables,
    Comm. Math. Phys. 79 (1981), 327–340.
\bibitem{NF}
    T. Nagao and P. J. Forrester,
    Multilevel Dynamical Correlation Function for Dyson's Brownian
    Motion Model of Random Matrices,
    Phys Lett. A 247 (1998), 42--46.
\bibitem{SF}
    E. Strahov and Y.V. Fyodorov,
    Universal results for correlations of characteristic polynomials: Riemann-Hilbert
    approach,
    Comm. Math. Phys. 241 (2003), 343--382.
\bibitem{Sz}
     G. Szeg\H{o},
     Orthogonal polynomials,
     Amer. Math. Soc. Coll. Publ. Vol 23, Amer. Math. Soc., Providence, R.I. 1975.
\bibitem{TW}
    C. Tracy and H. Widom,
    Differential equations for Dyson processes,
    Comm. Math. Phys. 252 (2004), 7--41.
\bibitem{Uv}
    V. B. Uvarov,
    The connection between systems of polynomials orthogonal with respect to
    different distribution functions,
    USSR Comput. Math. and Math. Phys. 9 (Part 2) (1969), 25--36.
\bibitem{VAGK} W. Van Assche, J.S. Geronimo, and A.B.J. Kuijlaars,
    Riemann-Hilbert problems for multiple orthogonal polynomials,
     Special Functions 2000: Current Perspectives and Future Directions
    (J. Bustoz et al., eds.), Kluwer, Dordrecht, 2001, pp. 23--59.
\end{thebibliography}
\end{document}